\providecommand{\tabularnewline}{\\}
\DeclareRobustCommand{\lyxsout}[1]{\ifx\\#1\else\sout{#1}\fi}
\newcommand{\lyxaddress}[1]{
	\par {\raggedright #1
	\vspace{1.4em}
	\noindent\par}
}
\theoremstyle{plain}
\newtheorem{thm}{\protect\theoremname}
\theoremstyle{definition}
\newtheorem{defn}[thm]{\protect\definitionname}
\theoremstyle{plain}
\newtheorem{lem}[thm]{\protect\lemmaname}
\theoremstyle{definition}
\newtheorem{example}[thm]{\protect\examplename}
\theoremstyle{plain}
\newtheorem{cor}[thm]{\protect\corollaryname}
\date{}
\providecommand{\corollaryname}{Corollary}
\providecommand{\definitionname}{Definition}
\providecommand{\examplename}{Example}
\providecommand{\lemmaname}{Lemma}
\providecommand{\theoremname}{Theorem}
\begin{document}
\title{Contextuality and dichotomizations of random variables}
\author{Janne V.\ Kujala\textsuperscript{1} and Ehtibar N.\ Dzhafarov\textsuperscript{2}}
\maketitle

\lyxaddress{\begin{center}
\textsuperscript{1}University of Turku, Finland, jvk@iki.fi\\\textsuperscript{2}Purdue
University, USA, ehtibar@purdue.edu
\par\end{center}}
\begin{abstract}
The Contextuality-by-Default approach to determining and measuring
the (non)contextuality of a system of random variables requires that
every random variable in the system be represented by an equivalent
set of dichotomous random variables. In this paper we present general
principles that justify the use of dichotomizations and determine
their choice. The main idea in choosing dichotomizations is that if
the set of possible values of a random variable is endowed with a
pre-topology (V-space), then the allowable dichotomizations split
the space of possible values into two linked subsets (``linkedness''
being a weak form of pre-topological connectedness). We primarily
focus on two types of random variables most often encountered in practice:
categorical and real-valued ones (including continuous random variables,
greatly underrepresented in the contextuality literature). A categorical
variable (one with a finite number of unordered values) is represented
by all of its possible dichotomizations. If the values of a random
variable are real numbers, then they are dichotomized by intervals
above and below a variable cut point.
\end{abstract}

\section{Introduction}

This paper deals with \emph{systems of random variables} 
\begin{equation}
\mathcal{R}=\left\{ R_{q}^{c}:c\in C,q\in Q,q\Yleft c\right\} ,\label{eq:system}
\end{equation}
where $Q$ denotes the set of properties $q$ being measured (generically
referred to as \emph{contents} of the random variables), $C$ denotes
the set of conditions under which the measurements are made (referred
to as \emph{contexts} of the random variables), and the relation $\Yleft$,
a subset of $Q\times C$, indicates which content is measured in which
context. As an example, consider the system of random variables describing
Bohm's version of the Einstein-Podolsky-Rosen experiment (EPR/B) \cite{Bohm1957},
the one for which Bell derived his celebrated inequalities \cite{Bell1966,CHSH1969,Bell1964}:
\begin{equation}
\begin{array}{|c|c|c|c||c|}
\hline R_{1}^{1} & R_{2}^{1} &  &  & c=1\\
\hline  & R_{2}^{2} & R_{3}^{2} &  & c=2\\
\hline  &  & R_{3}^{3} & R_{4}^{3} & c=3\\
\hline R_{1}^{4} &  &  & R_{4}^{4} & c=4\\
\hline\hline q=1 & q=2 & q=3 & q=4 & \text{\ensuremath{\mathcal{R}}}
\\\hline \end{array}\,.\label{eq:1}
\end{equation}
Here, the random variables represent measurements of spins of two
entangled spin-$\nicefrac{1}{2}$ particles, one measured by Alice
along the axis $q=1$ or $3$, and the other measured by Bob along
the axis $q=2$ or $4$. The contexts $c$ here are defined by the
four combinations of the Alice-Bob choices of axes, and we have $\left(q=1\right)\Yleft\left(c=1\right)$,
$\left(q=2\right)\Yleft\left(c=1\right)$, $\left(q=2\right)\Yleft\left(c=2\right)$,
etc. More generally, contexts $c$ may be defined by any systematically
varied conditions under which measurements are made. Thus, in the
system
\begin{equation}
\begin{array}{|c|c||c|}
\hline R_{1}^{1} & R_{2}^{1} & c=1\\
\hline R_{1}^{2} & R_{2}^{2} & c=2\\
\hline\hline q=1 & q=2 & \mathcal{R}
\\\hline \end{array}\label{eq:order system}
\end{equation}
the contexts $c=1$ and $c=2$ may represent two orders in which two
measurements, of $q=1$ and of $q=2$, are performed: $1\rightarrow2$
and $2\rightarrow1$. In this case, every $q$ is measured in every
$c$.

The sets of contexts and contents, $C$ and $Q$, can be infinite
and even uncountable, although in all practical applications known
to us $C$ is finite. The systems $\mathcal{R}$ are classified into
\emph{contextual} and \emph{noncontextual} ones. The traditional approaches
to contextuality are confined to systems without disturbance, i.e.,
those in which any two $R_{q}^{c}$ and $R_{q}^{c'}$ have identical
distributions. We call such systems \emph{consistently connected}
\cite{DKL2015}. In fact, in the traditional analysis one usually
assumes that consistent connectedness holds in the \emph{strong form}:
for any pair of contexts $c$, $c\prime,$ 
\begin{equation}
\left\{ R_{q}^{c}:q\in Q,q\prec c,c'\right\} \overset{d}{=}\left\{ R_{q}^{c'}:q\in Q,q\prec c,c'\right\} ,
\end{equation}
where $\overset{d}{=}$ means ``have the same distribution.'' That
is, the joint distributions for identically subscripted random variables
are identical \cite{AbramBarbMans2011,AbramskyBrand2011}. For instance,
in system (\ref{eq:order system}), strong consistent connectedness
means that the two distributions, in $c=1$ and $c=2$, are identical:
$\left\{ R_{1}^{1},R_{2}^{1}\right\} \overset{d}{=}\left\{ R_{1}^{2},R_{2}^{2}\right\} $.

Our approach, however, called Contextuality-by-Default (CbD) \cite{Dzh2017Nothing,Dzh2019,DzhCerKuj2017,DzhKujFoundations2017,KujDzhLar2015,KujDzhMeasures,KujDzhProof2016,DKC2020},
also applies to systems with disturbance (e.g., signaling ones), generically
referred to as \emph{inconsistently connected} systems. The reason
for this is that, both in quantum physics and in non-physical applications,
inconsistently connected systems are abundant. Declaring them all
contextual or denying the applicability to them of the notion of contextuality
seems unreasonably restrictive, as this leaves important empirical
phenomena outside contextuality analysis. For instance, system (\ref{eq:1})
describes not only the EPR/B experiment, but also a single photon
two-slit experiment. In this application $q=1$ and $3$ stand for
the left slit open and closed, respectively, and $q=2$ and $4$ stand
for the right slit open and closed, respectively. The random variables
$R_{q}^{c}$ are binary, indicating whether the photon in a given
trial hits a localized detector having passed through $q$ when the
two slits are in a particular closed-open arrangement $c$. For example,
$R_{q=2}^{c=2}=1$ if the particle passes through the open right slit
and hits the detector when the left slit is closed. This system is
inconsistently connected, and its CbD analysis shows that it is noncontextual
\cite{DKTwoSlit}. By contrast, a three-slit single particle experiment,
as shown in the same paper, is described by an inconsistently connected
system that can be contextual or noncontextual depending on specific
distributions of the random variables.

To give another example, system (\ref{eq:order system}) can describe
two sequential projective measurements performed on a single particle,
and then the system can be easily shown to be inconsistently connected
(and CbD analysis shows it is noncontextual \cite{DzhZhaKuj2016}).
Outside quantum mechanics, system (\ref{eq:order system}) describes
an important behavioral phenomenon called ``question order effect''
\cite{Wang}. Mathematically, this phenomenon is precisely the inconsistent
connectedness. In this application, $q=1$ and $2$ are two questions
that can be asked of a responder in one of two possible orders.

A practical benefit offered by CbD compared to traditional approaches
to contextuality is the ability to analyze real experiments, in which
inconsistent connectedness is present either due to the nature of
the experimental object, or due to unavoidable or inadvertent design
biases. Thus, an important quantum-mechanical experiment \cite{Lapkiewicz2015}
aimed at testing the contextuality inequalities for cyclic systems
of rank 5 \cite{KCBS2008} (cyclic systems will be defined below)
exhibits two violations of consistent connectedness, one of them expected,
the other inadvertent. This makes the traditional theory of contextuality
inapplicable without elaborate work-arounds. The CbD analysis of this
experiment \cite{KujDzhLar2015} faces no such difficulty, and demonstrates
contextuality in these data with no ``corrections'' thereof involved.
Several other applications of CbD to quantum-mechanical experiments
can be found in the literature, e.g. \cite{Ariasetal.2015,Crespietal.2017,Fluhmannetal2018,Zhanetal.2017,Leupoldetal.2018,Malinowskietal.}.
Bacciagaluppi \cite{bacciagaluppi2,Bacciagaluppi2015} used CbD to
study the Leggett-Garg paradigm \cite{LeggGarg1985} where ``signaling
in time'' typically leads to inconsistent connectedness \cite{KoflerBrukner2013,BudroniBook2016}.

In human behavior (including decision making and psychophysical judgments)
inconsistent connectedness is universal. Numerous attempts to demonstrate
contextuality in behavioral and social systems (reviewed in \cite{DzhZhaKuj2016})
have failed because they overlooked or could not properly handle this
fact. Contextuality in some systems of random variables describing
human behavior was, however, unambiguously demonstrated in recent
experiments \cite{CervDzh2018,CervDzh2019,Basievaetall2018}.

This paper focuses on a particular aspect of CbD, one that has not
been sufficiently elaborated previously. Namely, CbD requires that
in contextuality analysis of arbitrary systems every non-binary random
variable $R_{q}^{c}$ should be dichotomized, replaced with a set
of jointly distributed binary variables. We explain in this paper
why this should be done and how one is to choose the set of such dichotomizations.
We do this by systematically introducing the basics of CbD and relating
them to several principles or desiderata for an acceptable theory
of contextuality. In the process, we also explain other features of
CbD, such as the use of multimaximally connected couplings.

Let us explain the terminology. The distribution of each random variables
$R_{q}^{c}$ shows the probabilities of various measurable subsets
of the set $E_{q}$ of possible values of $R_{q}^{c}$. The types
of the sets $E_{q}$ endowed with measurable subsets are virtually
unlimited: the elements of $E_{q}$ can be numbers, functions, sets,
etc. It is, however, always possible to present $R_{q}^{c}$ by a
set of jointly distributed \emph{binary} random variables, those attaining
values 0 and 1. Indeed, for every measurable subset $A$ of $E_{q}$
one can form a random variable 
\begin{equation}
R_{q,A}^{c}=\left[R_{q}^{c}\in A\right]:=\begin{cases}
1 & \textnormal{if }R_{q}^{c}\in A\\
0 & \text{otherwise}
\end{cases}.\label{eq:dichtomization}
\end{equation}
The joint distribution of $\left\{ R_{q,A}^{c}:A\in\Sigma_{q}\right\} $,
where $\Sigma_{q}$ is the sigma-algebra on $E_{q}$, is uniquely
determined by and uniquely determines the distribution of $R_{q}^{c}$.
The binary variables $R_{q,A}^{c}$ are called \emph{dichotomizations}
of $R_{q}^{c}$, and $\left\{ A,E_{q}-A\right\} $ is called a dichotomization
of $E_{q}$. We can agree not to distinguish $R_{q,A}^{c}$ and $R_{q,E_{q}-A}^{c}$,
and also exclude $A=\emptyset$ and $A=E_{q}$, for obvious reasons.

The problem of choice arises because in most cases $\left\{ R_{q,A}^{c}:A\in\Sigma_{q}\right\} $
is too large a set of dichotomizations, and one can equivalently represent
$R_{q}^{c}$ by much smaller sets $\left\{ R_{q,A}^{c}:A\in\varUpsilon_{q}\right\} $,
with $\varUpsilon_{q}$ a proper subset of $\Sigma_{q}$. For instance,
if a random variable $R_{q}^{c}$ is absolutely continuous with respect
to the usual Lebesgue measure, the set of possible dichotomizations
includes $R_{q,A}^{c}$ for all Borel-measurable $A$ (or ``one half''
of them, as we do not distinguish $A$ and $E_{q}-A$). However, as
shown in \cite{DzhCerKuj2017}, using this set would lead to the disappointing
conclusion that all inconsistently connected systems comprising such
random variables are contextual (contravening thereby the Analyticity
principle formulated in Section \ref{sec:Contextuality}). One can
do much better by observing that the distribution of such a variable
is uniquely described by its distribution function 
\begin{equation}
F_{q}^{c}\left(x\right)=\Pr\left[R_{q}^{c}\leq x\right],
\end{equation}
whence it follows that $R_{q}^{c}$ can be equivalently represented
by a much smaller set of the variables
\begin{equation}
R_{q,(-\infty,x]}^{c}=\left[R_{q}^{c}\leq x\right].
\end{equation}
We will see that this choice of dichotomizations is dictated by the
general principles formulated in Section \ref{sec:General-principles}.
The theory of continuous and other real-valued random variables is
discussed in Section \ref{sec:cuts}.

Another class of random variables that plays an important role in
contextuality analysis is the class of \emph{categorical} variables,
those with a finite set of values that are arbitrary labels, with
no ordering. Let, e.g., $R_{q}^{c}$ have the probability mass function
\begin{equation}
\begin{array}{ccccc}
value: & 1 & 2 & 3 & 4\\
probability: & p_{1} & p_{2} & p_{3} & p_{4}
\end{array}.
\end{equation}
It has 7 distinct dichotomizations,
\begin{equation}
R_{q,\left\{ 1\right\} }^{c},R_{q,\left\{ 2\right\} }^{c},R_{q,\left\{ 3\right\} }^{c},R_{q,\left\{ 4\right\} }^{c},R_{q,\left\{ 1,2\right\} }^{c},R_{q,\left\{ 2,3\right\} }^{c},R_{q,\left\{ 1,3\right\} }^{c},\label{eq:all dichotomizations}
\end{equation}
but $R_{q}^{c}$ can also be presented by a subset of them, say,
\begin{equation}
R_{q,\left\{ 1\right\} }^{c},R_{q,\left\{ 2\right\} }^{c},R_{q,\left\{ 3\right\} }^{c},
\end{equation}
with the joint distribution
\begin{equation}
\begin{array}{cccc}
R_{q,\left\{ 1\right\} }^{c} & R_{q,\left\{ 2\right\} }^{c} & R_{q,\left\{ 3\right\} }^{c} & probabilty\\
1 & 0 & 0 & p_{1}\\
0 & 1 & 0 & p_{2}\\
0 & 0 & 1 & p_{3}\\
0 & 0 & 0 & p_{4}\\
\vdots & \vdots & \vdots & 0
\end{array}.
\end{equation}
The theory of categorical random variables is discussed in Section
\ref{sec:categorical}. The application of the general principles
here leads one to choose the complete set (\ref{eq:all dichotomizations})
of dichotomizations over its subsets.

\section{\label{sec:Probabilistic-contextuality}Systems and their couplings}

The random variables\emph{ }$R_{q}^{c}$ in (\ref{eq:system}) are
measurable functions
\begin{equation}
R_{q}^{c}:\Omega_{c}\to E_{q},
\end{equation}
which implies that for any given context $c\in C$, the random variables
in
\begin{equation}
R^{c}:=\{R_{q}^{c}:q\in Q,\,q\Yleft c\}
\end{equation}
are jointly distributed (defined on the same sample space $\Omega_{c}$).\footnote{We use script letters, $\mathcal{R},\mathcal{R}_{q},$ etc., for a
set of random variables if they are not necessarily jointly distributed.
If, as in $R^{c}$, all elements of a set are jointly distributed,
then $R^{c}$ is a random variable in its own right, and we can use
ordinary italics.} The set of random variables $R^{c}$ is called a \emph{bunch} (intuitively,
the variables measured jointly).

For a given content $q\in Q$, the set
\begin{equation}
\mathcal{R}_{q}:=\{R_{q}^{c}:c\in C,\,q\Yleft c\}
\end{equation}
is called a \emph{connection}; the random variables in a connection
take their values in the same space $E_{q}$, endowed with the same
sigma-algebra $\Sigma_{q}$. The random variables in a connection
are \emph{stochastically unrelated} (defined on different sample spaces
$\Omega_{c}$).

A powerful way to investigate relations between stochastically unrelated
variables is to construct their probabilistic copies and make them
jointly distributed. For instance, to find out how different are two
stochastically unrelated variables $X_{1}$ and $X_{2}$ one can consider
all jointly distributed variables $\left(Y_{1},Y_{2}\right)$ such
that $Y_{1}\mathop{=}\limits ^{d}X_{1}$ and $Y_{2}\mathop{=}\limits ^{d}X_{2}$,
and ask what is the minimal probability with which $Y_{1}$ and $Y_{2}$
can differ. Note that this question is meaningless if posed for the
$\left\{ X_{1},X_{2}\right\} $ themselves, but the minimal probability
of $Y_{1}\not=Y_{2}$ can be viewed as a degree of difference between
$X_{1}$ and $X_{2}$. The pair $\left(Y_{1},Y_{2}\right)$ in this
example is a special case of the notion defined next.
\begin{defn}
A \emph{coupling} of a set $\{X_{i}:i\in I\}$ of random variables
(generally stochastically unrelated) is a \emph{jointly distributed}
set $\{Y_{i}:i\in I\}$ of correspondingly indexed random variables
where each $Y_{i}$ has the same distribution as $X_{i}:$
\begin{equation}
Y_{i}\mathop{=}\limits ^{d}X_{i},
\end{equation}
for all $i\in I.$ Two couplings of the same set of random variables
are considered indistinguishable if they have the same distribution.\footnote{This means that the choice of a domain space for $\{Y_{i}:i\in I\}$
is irrelevant. There is a canonical way of constructing this space.
Let the set of values of $X_{i}$ be $E_{i}$, with the induced sigma-algebra
$\Sigma_{q}$. Then the domain space for $\{Y_{i}:i\in I\}$ can be
chosen as the set $\prod_{i\in I}E_{i}$ endowed with $\bigotimes_{i\in I}\Sigma_{i}$.
With this choice, every $Y_{i}$ is a coordinate projection function.}
\end{defn}

For the wealth of uses of this notion in probability theory, see e.g.
\cite{Thorisson2000}. The following special types of couplings are
important in analyzing systems of random variables. An \emph{independent
coupling} is a coupling $\{Y_{i}:i\in I\}$ such that all random variables
in it are independent. This coupling exists and is unique for any
$\{X_{i}:i\in I\}$. A \emph{maximal coupling} $\{Y_{1},Y_{2}\}$
of a pair $\{X_{1},X_{2}\}$ of random variables is a coupling that
maximizes the \emph{coupling probability} $\Pr\left[Y_{1}=Y_{2}\right]$
among all couplings of $\{X_{1},X_{2}\}$. This coupling also always
exist, but is not generally unique unless $X_{1},X_{2}$ are binary
variables.\footnote{Denoting the measurable spaces in which $X_{1},X_{2}$ are taking
their values by $\left(E_{1},\Sigma_{1}\right)$ and $\left(E_{2},\Sigma_{2}\right)$,
the definition of maximal coupling is predicated on the assumption
that the diagonal set $\left\{ \left(x,y\right)\in E_{1}\times E_{2}:x=y\right\} $
is measurable in $\Sigma_{1}\otimes\Sigma_{2}$. For dichotomous random
variables this condition is satisfied trivially.}
\begin{defn}
A \emph{multimaximal} coupling $\{Y_{i}:i\in I\}$ of $\{X_{i}:i\in I\}$
is a coupling such that $\{Y_{i},Y_{j}\}$ is a maximal coupling of
$\{X_{i},X_{j}\}$ for all $i,j\in I$.

Multimaximal couplings play a central role in contextuality analysis.
As a special case, the \emph{identity coupling} of $\{X_{i}:i\in I\}$
is a coupling $\{Y_{i}:i\in I\}$ such that all random variables in
it are identical: $\Pr\left[Y_{1}=Y_{2}\right]=1$ for all $i,j\in I$.
Such a coupling only exists if variables in $\{X_{i}:i\in I\}$ are
identically distributed, and then $\{Y_{i}:i\in I\}$ is the unique
multimaximal coupling of $\{X_{i}:i\in I\}$.
\end{defn}

The following theorem characterizes the multimaximal couplings for
binary (0/1) random variables.
\begin{thm}
\label{thm:multimax-equi}Let $\{X_{i}:i\in I\}$ be a set of binary
random variables, and let $\{Y_{i}:i\in I\}$ be a coupling of it.
Then, the following statements are equivalent:
\begin{enumerate}
\item The coupling $\{Y_{i}:i\in I\}$ is multimaximal.
\item \label{multimax-equi-finite-dist}Given any finite subset $\{i_{1},\dots,i_{n}\}\subset I$
of indices such that 
\[
\Pr[X_{i_{1}}=1]\ge\cdots\ge\Pr[X_{i_{n}}=1],
\]
the distribution of $Y_{i_{1}},\dots,Y_{i_{n}}$ is given by the probability
mass function $p(y_{i_{1}},y_{i_{2}},y_{i_{3}},\dots,y_{i_{n-1}},y_{i_{n}})$
determined by the $n+1$ probabilities (all other probabilities being
zero) 
\begin{align*}
{\textstyle p}(0,0,0,\dots,0,0) & =1-\Pr[X_{i_{1}}=1],\\
{\textstyle p}(1,0,0,\dots,0,0) & =\Pr[X_{i_{1}}=1]-\Pr[X_{i_{2}}=1],\\
{\textstyle p}(1,1,0,\dots,0,0) & =\Pr[X_{i_{2}}=1]-\Pr[X_{i_{3}}=1],\\
 & \vdots\\
{\textstyle p}(1,1,1,\dots,1,0) & =\Pr[X_{i_{n-1}}=1]-\Pr[X_{i_{n}}=1],\\
{\textstyle p}(1,1,1,\dots,1,1) & =\Pr[X_{i_{n}}=1].
\end{align*}
\item For any finite subset $\{i_{1},\dots,i_{n}\}\subset I$ of indices,
both 
\[
\Pr[Y_{i_{1}}=Y_{i_{2}}=\dots=Y_{i_{n}}=0]\textnormal{ and }\Pr[Y_{i_{1}}=Y_{i_{2}}=\dots=Y_{i_{n}}=1]
\]
are maximal possible probabilities among all couplings $\{Y_{i}:i\in I\}$
of $\{X_{i}:i\in I\}$.
\item For any pair of indices $i,j\in I$ such that $\Pr[X{}_{i}=1]\geq\Pr[X_{j}=1]$,
any one of the following statements:
\begin{enumerate}
\item \label{multimax-equi-11}$\Pr[Y{}_{i}=1,Y_{j}=1]=\Pr[X_{j}=1]$, the
maximal value among all couplings of $\left\{ X_{i},X_{j}\right\} $.
\item $\Pr[Y{}_{i}=0,Y_{j}=0]=\Pr[X_{i}=0]$, the maximal value among all
couplings of $\left\{ X_{i},X_{j}\right\} $.
\item \label{enu:multimax-equi-10}$\Pr[Y{}_{i}=0,Y_{j}=1]=0.$
\end{enumerate}
\end{enumerate}
\end{thm}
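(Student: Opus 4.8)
The plan is to prove the four statements equivalent by first recording the elementary structure of a maximal coupling of two binary variables, which immediately gives $(1)\Leftrightarrow(4)$, and then closing the loop with $(1)\Rightarrow(2)\Rightarrow(3)\Rightarrow(4)$. Suppose $\Pr[X_i=1]\ge\Pr[X_j=1]$. For any coupling $(Y_i,Y_j)$, all four cell probabilities are pinned down by the single number $\varepsilon:=\Pr[Y_i=0,Y_j=1]\ge 0$ together with the fixed marginals: $\Pr[Y_i=1,Y_j=1]=\Pr[X_j=1]-\varepsilon$, $\Pr[Y_i=1,Y_j=0]=\Pr[X_i=1]-\Pr[X_j=1]+\varepsilon$, and $\Pr[Y_i=0,Y_j=0]=\Pr[X_i=0]-\varepsilon$. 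Hence $\Pr[Y_i=Y_j]=\Pr[X_i=0]+\Pr[X_j=1]-2\varepsilon$ is maximized precisely when $\varepsilon=0$; so the maximal coupling of a binary pair is unique, and each of (4a), (4b), (4c) is equivalent to $\varepsilon=0$. This yields $(1)\Leftrightarrow(4)$ at once: statement (1) says every pair $(Y_i,Y_j)$ is a maximal coupling of $(X_i,X_j)$, which, by uniqueness and the computation above, is exactly what (4) asserts once the pair is relabeled so that $\Pr[X_i=1]\ge\Pr[X_j=1]$.

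For $(1)\Rightarrow(2)$, which I expect to be the only substantive step, fix a finite subset and index it $i_1,\dots,i_n$ with $\Pr[X_{i_1}=1]\ge\cdots\ge\Pr[X_{i_n}=1]$. Pairwise maximality and the preceding paragraph give $\Pr[Y_{i_k}=0,Y_{i_\ell}=1]=0$ for all $k<\ell$, so the event that a $0$ precedes a $1$ somewhere in the string $(Y_{i_1},\dots,Y_{i_n})$ is a union of $\binom{n}{2}$ null events and hence null. Thus the joint distribution is supported on the $n+1$ ``staircase'' strings $(1,\dots,1,0,\dots,0)$; writing $\pi_k$ for the probability of the string with exactly $k$ leading ones, the identities $\Pr[X_{i_k}=1]=\Pr[Y_{i_k}=1]=\pi_k+\pi_{k+1}+\cdots+\pi_n$ telescope to give precisely the list of probabilities in (2). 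The converse $(2)\Rightarrow(1)$ is immediate, since the $n=2$ instance of (2) is exactly the maximal coupling of the pair, so every pair is maximal.

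The remaining implications are bookkeeping. For $(2)\Rightarrow(3)$: from the staircase probabilities, $\Pr[Y_{i_1}=\cdots=Y_{i_n}=1]=\Pr[X_{i_n}=1]=\min_k\Pr[X_{i_k}=1]$ and $\Pr[Y_{i_1}=\cdots=Y_{i_n}=0]=\Pr[X_{i_1}=0]=\min_k\Pr[X_{i_k}=0]$, while for \emph{any} coupling $\Pr[\bigcap_k\{Y_{i_k}=1\}]\le\min_k\Pr[X_{i_k}=1]$ and $\Pr[\bigcap_k\{Y_{i_k}=0\}]\le\min_k\Pr[X_{i_k}=0]$; hence the coupling in hand attains both maxima. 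For $(3)\Rightarrow(4)$: the maximal value of $\Pr[Y_i=Y_j=1]$ over all couplings of a pair is $\min(\Pr[X_i=1],\Pr[X_j=1])$ (trivial upper bound, achieved by coupling both variables monotonically to one uniform variable), so for a pair with $\Pr[X_i=1]\ge\Pr[X_j=1]$ statement (3) forces $\Pr[Y_i=1,Y_j=1]=\Pr[X_j=1]$, which is (4a). This closes the cycle.

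Finally, since each of the four conditions refers only to finite-dimensional marginals of the coupling (pairs in (1) and (4), arbitrary finite subsets in (2) and (3)), the possible infinitude of $I$ introduces no measure-theoretic difficulty, and the finite-dimensional equivalences above suffice. The hard part is really just the support argument in $(1)\Rightarrow(2)$; everything else is arithmetic with the marginal constraints. The one point needing care there is ties $\Pr[X_{i_k}=1]=\Pr[X_{i_{k+1}}=1]$: maximality then forces $Y_{i_k}=Y_{i_{k+1}}$ almost surely, which is consistent with the corresponding staircase probability $\Pr[X_{i_k}=1]-\Pr[X_{i_{k+1}}=1]$ vanishing, so the ordering of tied indices is immaterial.
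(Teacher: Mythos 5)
Your proof is correct, and it is worth noting that it is considerably more self-contained than the paper's: the paper's proof of this theorem is essentially a pointer to an earlier reference for the cycle $1\Rightarrow2\Rightarrow3\Rightarrow1$, supplemented by brief remarks on how statement 4 attaches to that cycle (4a, 4b, 4c mutually equivalent by direct computation, each implied by 2 with $n=2$, and 4a together with 4b implying 1). You instead prove everything from scratch, and you organize the implications slightly differently: you first pin down the full two-by-two joint table of any coupling of a binary pair by the single parameter $\varepsilon=\Pr[Y_i=0,Y_j=1]$, which simultaneously gives uniqueness of the maximal coupling, the mutual equivalence of 4a, 4b, 4c, and the equivalence $(1)\Leftrightarrow(4)$; you then close the loop via $(1)\Rightarrow(2)\Rightarrow(3)\Rightarrow(4)$. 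The substantive step, the staircase-support argument for $(1)\Rightarrow(2)$ (the union of the $\binom{n}{2}$ null events $\{Y_{i_k}=0,Y_{i_\ell}=1\}$, $k<\ell$, followed by telescoping the marginals), is exactly the content that the paper delegates to the cited work, and your handling of ties and of the reduction to finite-dimensional marginals for infinite $I$ is careful and correct. The one-parameter parametrization of the pair coupling is a nice economy: it makes the equivalence of the three sub-statements of (4) and the uniqueness claim fall out of a single computation rather than three separate ones.
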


\begin{proof}
With reference to \cite{DzhCerKuj2017}, 1$\Rightarrow2\Rightarrow3$
$\Rightarrow$1. That any of 4a, 4b, 4c implies the other two is established
by direct computation, any of them is implied by 2, on putting $n=2$,
and 4a with 4b imply 1.
\end{proof}
\begin{thm}
\label{thm:unique}The multimaximal coupling $\{Y_{i}:i\in I\}$ of
a set of binary random variables $\{X_{i}:i\in I\}$ is unique.
\end{thm}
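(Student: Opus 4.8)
The plan is to read off uniqueness from the explicit description of the finite-dimensional distributions in Theorem~\ref{thm:multimax-equi}. Suppose $\{Y_{i}:i\in I\}$ and $\{Y'_{i}:i\in I\}$ are both multimaximal couplings of $\{X_{i}:i\in I\}$. By the equivalence of statements~1 and \ref{multimax-equi-finite-dist} in Theorem~\ref{thm:multimax-equi}, for every finite subset $\{i_{1},\dots,i_{n}\}\subset I$ the joint probability mass function of $(Y_{i_{1}},\dots,Y_{i_{n}})$ coincides with that of $(Y'_{i_{1}},\dots,Y'_{i_{n}})$: after relabelling so that $\Pr[X_{i_{1}}=1]\ge\cdots\ge\Pr[X_{i_{n}}=1]$, both are given by the same list of $n+1$ numbers, determined solely by the marginals $\Pr[X_{i_{k}}=1]$. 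If some of these marginals coincide, the corresponding entry of the list equals $0$, so the resulting distribution does not depend on how ties are broken and the finite-dimensional distribution is well defined. Hence the two couplings have identical finite-dimensional distributions.

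Next I would pass from agreement of all finite-dimensional distributions to agreement of the full joint distribution. Using the canonical realisation described in the footnote to the definition of coupling, both $\{Y_{i}\}$ and $\{Y'_{i}\}$ may be taken to be the coordinate projections on $\{0,1\}^{I}$ equipped with the product $\sigma$-algebra $\bigotimes_{i\in I}2^{\{0,1\}}$, so that each coupling is identified with a probability measure on this space. The cylinder sets, i.e., those depending on only finitely many coordinates, form a $\pi$-system that generates the product $\sigma$-algebra, and the equality of finite-dimensional distributions established above is precisely the statement that the two measures agree on every cylinder set. By the standard uniqueness theorem for measures agreeing on a generating $\pi$-system (equivalently, the uniqueness part of the Kolmogorov extension theorem), the two measures are equal. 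Since, by definition, couplings with the same distribution are indistinguishable, this shows that the multimaximal coupling is unique.

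There is essentially no serious obstacle once Theorem~\ref{thm:multimax-equi} is available; the only points requiring care are (i) checking that statement~\ref{multimax-equi-finite-dist} of that theorem pins down the finite-dimensional distributions unambiguously even when the marginals $\Pr[X_{i}=1]$ are not all distinct, and (ii) correctly invoking a measure-theoretic uniqueness result when $I$ is infinite or uncountable, so that one cannot argue coordinate by coordinate. I would also remark in passing that the same family of finite-dimensional distributions is consistent in the Kolmogorov sense — marginalising the $n$-point distribution over $y_{i_{n}}$ returns the $(n-1)$-point distribution — so that the Kolmogorov extension theorem simultaneously re-derives existence of the multimaximal coupling, although existence is not what the theorem asks for.
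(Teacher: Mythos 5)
Your proposal is correct and follows essentially the same route as the paper: deduce from statement~\ref{multimax-equi-finite-dist} of Theorem~\ref{thm:multimax-equi} that any two multimaximal couplings have identical finite-dimensional distributions, then conclude equality of the full distributions because the cylinder sets form a $\pi$-system generating the product $\sigma$-algebra. Your added remarks on tie-breaking when marginals coincide and on Kolmogorov consistency are sound but not needed beyond what the paper's argument already contains.
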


\begin{proof}
Consider two couplings $\{Y{}_{i}:i\in I\}$ and $\{Y'_{i}:i\in I\}$
of $\{X{}_{i}:i\in I\}$, and let $E_{i}$ denote the set of values
of $X_{i}$, for $i\in I$. It follows from Theorem \ref{thm:multimax-equi}
(statement \ref{multimax-equi-finite-dist}) that their distributions
agree on all \emph{cylinder subsets} of $E^{I}$ (i.e., Cartesian
products of measurable $S_{i}\subset E_{i}$, $i\in I$, with $S_{i}=E_{i}$
for all but a finite number of $i\in I$). The cylinder sets form
a \emph{$\pi$-system} (a nonempty collection of sets closed under
finite intersections). Since the two distributions agree on a $\pi$-system
(because cylinder sets correspond to finite subsets $J\subset I$),
it follows \cite{Williams1991} that they must agree on the $\sigma$-algebra
generated by the $\pi$-system, in this case the product $\sigma$-algebra
of $E^{I}$.
\end{proof}
\begin{thm}
\label{thm:pairwise-maximal-construction}Given any set of binary
random variables $\{X_{i}:i\in I\}$, the set $\{Y_{i}:i\in I\}$
defined by
\[
Y_{i}=[U\le\Pr(X_{i}=1)]:=\begin{cases}
1, & U\le\Pr(X_{i}=1),\\
0, & \text{otherwise,}
\end{cases}
\]
where $U$ is a uniform random variable on $[0,1]$, is the (unique)
multimaximal coupling of $\{X_{i}:i\in I\}$.
\end{thm}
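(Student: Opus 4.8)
The plan is to verify separately the three assertions packed into the statement: that $\{Y_i:i\in I\}$ is a coupling of $\{X_i:i\in I\}$, that it is multimaximal, and that it is the unique such coupling. The first is immediate. Since $U$ is uniform on $[0,1]$, we have $\Pr(Y_i=1)=\Pr\bigl(U\le\Pr(X_i=1)\bigr)=\Pr(X_i=1)$, so $Y_i\overset{d}{=}X_i$ for every $i\in I$; and the $Y_i$ are jointly distributed because they are all measurable functions of the single random variable $U$. Hence $\{Y_i:i\in I\}$ is a coupling.

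The substance of the argument is multimaximality, and here the construction does all the work. Fix a pair $i,j\in I$; after relabeling we may assume $\Pr(X_i=1)\ge\Pr(X_j=1)$. Then the two threshold events are nested, $\{Y_j=1\}=\{U\le\Pr(X_j=1)\}\subseteq\{U\le\Pr(X_i=1)\}=\{Y_i=1\}$, which gives $\Pr(Y_i=0,\,Y_j=1)=0$. This is exactly statement 4(c) of Theorem~\ref{thm:multimax-equi}, so that theorem guarantees $\{Y_i,Y_j\}$ is a maximal coupling of $\{X_i,X_j\}$. As the pair $i,j$ was arbitrary, $\{Y_i:i\in I\}$ is multimaximal by definition.

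Uniqueness then needs no fresh argument: Theorem~\ref{thm:unique} already asserts that the multimaximal coupling of a set of binary random variables is unique, so the coupling $\{Y_i:i\in I\}$ constructed here necessarily coincides with it, justifying the parenthetical ``(unique)'' in the statement.

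I do not anticipate a genuine obstacle. The construction is tailor-made so that one uniform variable $U$ produces, simultaneously for every pair of indices, the nesting of the ``$Y=1$'' events that characterizes pairwise maximality. The only things needing care are bookkeeping ones: invoking the most convenient of the equivalent conditions in Theorem~\ref{thm:multimax-equi} (condition 4(c) is best, since it concerns a single pair and, after the harmless relabeling $\Pr(X_i=1)\ge\Pr(X_j=1)$, requires no further case analysis), and noting that the measurability hypothesis underlying the notion of maximal coupling is satisfied trivially for binary variables, as remarked just after the definition of maximal coupling.
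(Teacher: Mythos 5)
Your proposal is correct and follows essentially the same route as the paper: verify the coupling property from the marginals and joint measurability with respect to $U$, then reduce pairwise maximality to one of the equivalent conditions in Theorem~\ref{thm:multimax-equi} (you use 4(c) via nesting of the threshold events, the paper uses 4(a) via $\Pr[Y_i=Y_j=1]=\min$ --- a cosmetic difference), with uniqueness delegated to Theorem~\ref{thm:unique}.
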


\begin{proof}
That $\{Y_{i}:i\in I\}$ is a coupling of $\{X_{i}:i\in I\}$ follows
from the fact that (i) all $Y_{i}$ are functions of the same random
variable $U$ (hence they are jointly distributed), and (ii) $\Pr\left[Y_{i}=1\right]=\Pr[U\le\Pr(X_{i}=1)]=\Pr(X_{i}=1)$.
For any $i,j\in I$,
\[
\Pr\left[Y_{i}=Y_{j}=1\right]=\Pr\left[U\le\min\left[\Pr(X_{i}=1),\Pr(X_{j}=1)\right]\right]=\min\left[\Pr(X_{i}=1),\Pr(X_{j}=1)\right],
\]
which implies multimaximality by Theorem \ref{thm:multimax-equi}
(statement \ref{multimax-equi-11}).
\end{proof}

\section{\label{sec:Contextuality}Contextuality}
\begin{defn}
A \emph{coupling} of a system $\mathcal{R}=\{R_{q}^{c}:c\in C,\,q\in Q,\,q\Yleft c\}$
is an identically indexed set $S=\{S_{q}^{c}:c\in C,\,q\in Q,\,q\Yleft c\}$
of random variables whose set of bunches $\{S^{c}:c\in C\}$ is a
coupling of the set of bunches $\{R^{c}:c\in C\}$ of $\mathcal{R}$.
\end{defn}

In other words, $S$ is a coupling of $\mathcal{R}$ if (as suggested
by the notation) the elements of $S$ are jointly distributed and,
for any $c\in C$,
\begin{equation}
S^{c}\overset{d}{=}R^{c}.
\end{equation}

Clearly, $S$ contains as its marginals the couplings $S_{q}=\left\{ S_{q}^{c}:c\in C,q\Yleft c\right\} $
for each of the connections $\mathcal{R}_{q}=\left\{ R_{q}^{c}:c\in C,q\Yleft c\right\} $
of $\mathcal{R}$. We can view the coupling $S$ as a system in its
own right, and its marginals $S_{q}$ as connections of this system.
Then we can equivalently say either that, within $S$, connections
$\mathcal{R}_{q}$ have couplings $S_{q}$ with some property $P$
(e.g., multimaximal couplings) or that $\mathcal{R}$ has a coupling
$S$ whose connections $S_{q}$ have the property $P$ (e.g., multimaximal
connections).

The traditional approach \cite{BudroniBook2016,Cabello2013,AbramBarbMans2011,AbramskyBrand2011,Araujoetal2013,Kurzynski2014,Kurzynskietal2012,Fine1982,SuppesZanotti1981}
is that a system is noncontextual if it has a coupling whose connection
$S_{q}$ is the identity coupling of $\mathcal{R}_{q}$, for every
$q\in Q$. Recall that in the identity coupling, $\Pr\left[S_{q}^{c}=S_{q}^{c'}\right]=1$
for all components of the connection $S_{q}$.\footnote{The notion of a coupling in the traditional approach is not used explicitly
(see \cite{DzhKujFoundations2017,Dzh2019} for difficulties this creates).
To our knowledge, Thorisson \cite{Thorisson2000} (Ch.$\:$1, Sec.$\:$10.4,
p.$\:$29) was first to use couplings in contextuality analysis of
a system. In CbD, they play a central role.} Thus, for a system $\mathcal{R}$ to be noncontextual in the traditional
sense, the system must be consistently connected, and all random variables
$R_{q}^{c}$, $R_{q}^{c'}$, $R_{q}^{c''}$, $\dots$ in every connection
$\mathcal{R}_{q}$ must correspond to one and the same random variable
$T_{q}:=S_{q}^{c}=S_{q}^{c'}=S_{q}^{c''}=\dots$ in some coupling
$S$ of $\mathcal{R}$. If a system is consistently connected but
such a coupling does not exist, the system is considered contextual.
For an inconsistently connected system, identity couplings of connections
do not exist, because $\Pr\left[S_{q}^{c}=S_{q}^{c'}\right]$ cannot
reach $1$ unless $R_{q}^{c}\overset{d}{=}R_{q}^{c'}$. All inconsistently
connected systems therefore, if one follows the logic of the traditional
approach, have to be treated as (trivially) contextual, or else as
systems whose contextuality status is undefined. This violates a general
principle that we will now formulate.

Let us define the format of the system $\mathcal{R}$ in (\ref{eq:system})
as 
\begin{equation}
\mathsf{f}:=\left(\Yleft,\{\left(E_{q},\Sigma_{q}\right):q\in Q\}\right).
\end{equation}
A format therefore specifies which content $q$ is measured in which
context $c$ (the sets $Q$ and $C$ are then effectively determined
as projections of $\Yleft$), and it also specifies the type of the
random variables involved, i.e. their sets $E_{q}$ of possible values
and the associated sets $\Sigma_{q}$ of events. Let us agree to exclude
the trivial formats in which every connection consists of a single
random variable. The principle in question is as follows.
\begin{description}
\item [{Analyticity}] For any given format, among all inconsistently connected
systems of this format there are noncontextual systems.
\end{description}
In other words, contextuality status of an inconsistently connected
system should depend on its bunch distributions rather than be predetermined
by its format. The importance of this principle is that it rules out
trivial extensions of the traditional contextuality theory, including
the one that declares all inconsistently connected systems contextual.

In CbD, the concept of (non-)contextuality is extended to inconsistently
connected systems by replacing identity couplings with multimaximal
couplings: the general idea is that a system is noncontextual if,
for all $q\Yleft c,c'$ simultaneously, the value of $\Pr\left[S_{q}^{c}=S_{q}^{c'}\right]$
in some coupling $S$ reaches its maximum (which is $1$ if and only
if $R_{q}^{c}\overset{d}{=}R_{q}^{c'}$). However, the established
definition in CbD requires that the variables in the systems be dichotomized
prior to being subjected to contextuality analysis.
\begin{defn}
\label{def:system dichotomization}A \emph{split representation }of
a system $\mathcal{R}=\{R_{q}^{c}:c\in C,q\in Q,q\Yleft c\}$ is a
system 
\[
\mathcal{D}=\{R_{q,A}^{c}:c\in C,q\in Q,A\in\varUpsilon_{q},\left(q,A\right)\Yleft c\},
\]
where

(i) $R_{q,A}^{c}$ are binary variables defined by (\ref{eq:dichtomization}),

(ii) the values of $R_{q,A}^{c}$ for all $A\in\varUpsilon_{q}$ uniquely
determine the value of $R_{q}^{c}$,

(iii) $\varUpsilon_{q}\subseteq\Sigma_{q}$, and $\Sigma_{q}$ is
the minimal sigma-algebra containing $\varUpsilon_{q}$,

(iv) $\left(q,A\right)\Yleft c$ if and only if $q\Yleft c$ and $A\in\varUpsilon_{q}$.\footnote{There is a slight abuse of notation here: we use the same symbol $\Yleft$
to indicate the format relation of both $\mathcal{R}$ and $\mathcal{D}$.}
\end{defn}

For any $q\in Q$, the subsystem 
\[
\mathcal{D}_{q}=\{R_{q,A}^{c}:c\in C,A\in\varUpsilon_{q},\left(q,A\right)\Yleft c\}
\]
 is a split-representation of the system consisting of the single
connection $\mathcal{R}_{q}$.

The indexation of $\varUpsilon_{q}$ implies that the same set of
dichotomizations is applied to all random variables in a given connection
$\mathcal{R}_{q}$. All these variables, we remind, have the same
set of values $E_{q}$ and the same sigma-algebra $\Sigma_{q}$. We
also remind the convention (to avoid trivial redundancy) that if $A\in\varUpsilon_{q}$
then $E_{q}-A\not\in\varUpsilon_{q}$, and $A$ is a proper, nonempty
subset of $E_{q}$.

The split representation $\mathcal{D}$ retains the same set of contexts
$C$ as in $\mathcal{R}$ but splits each ``old'' content $q$ into
a set of ``new'' contents $\left\{ \left(q,A\right):A\in\varUpsilon_{q}\right\} $.
For example, suppose that in the system
\begin{equation}
\begin{array}{|c|c|c|c}
\hline R_{1}^{1} & R_{2}^{1} &  & c=1\\
\hline R_{1}^{2} &  & R_{3}^{2} & c=2\\
\hline  & R_{2}^{3} & R_{3}^{3} & c=3\\
\hline q=1 & q=2 & q=3 & \mathcal{R}
\end{array}
\end{equation}
the random variables in connection $\mathcal{R}_{q=1}$ have values
$E_{1}=\{1,2,3,4\}$, the variables in connection $\mathcal{R}_{q=2}$
have values $E_{2}=\{a,b,c\}$, and the variables in connection $\mathcal{R}_{q=3}$
are binary. Then, we could represent the original system as
\begin{equation}
\begin{array}{|c|c|c|c|c|c|c}
\hline \left[R_{1}^{1}\in\{1\}\right] & \left[R_{1}^{1}\in\{1,2\}\right] & \left[R_{1}^{1}\in\{1,2,3\}\right] & \left[R_{2}^{1}\in\{a\}\right] & \left[R_{2}^{1}\in\{b\}\right] &  & c=1\\
\hline \left[R_{1}^{2}\in\{1\}\right] & \left[R_{1}^{2}\in\{1,2\}\right] & \left[R_{1}^{2}\in\{1,2,3\}\right] &  &  & R_{3}^{2} & c=2\\
\hline  &  &  & \left[R_{2}^{3}\in\{a\}\right] & \left[R_{2}^{3}\in\{b\}\right] & R_{3}^{3} & c=3\\
\hline q=\left(1,\left\{ 1\right\} \right) & q=\left(1,\left\{ 1,2\right\} \right) & q=\left(1,\left\{ 1,2,3\right\} \right) & q=\left(2,\left\{ a\right\} \right) & q=\left(2,\left\{ b\right\} \right) & q=3 & \mathcal{D}
\end{array}.
\end{equation}

Since the choice of a split representation of a given $\mathcal{R}$
is not unique without additional constraining principles (to be discussed
later), we denote 
\begin{equation}
\varUpsilon:=\left\{ \varUpsilon_{q}:q\in Q\right\} ,
\end{equation}
and refer to $\mathcal{D}$ in Definition \ref{def:system dichotomization}
as the $\varUpsilon$-split representation of $\mathcal{R}$.
\begin{defn}
\label{def:binary-noncontextuality}A system $\mathcal{R}$ is $\varUpsilon$-\emph{noncontextual}
if its $\varUpsilon$-split representation $\mathcal{D}$ has a coupling
whose connections are multimaximal (such a coupling is called \emph{multimaximally
connected}). Otherwise $\mathcal{R}$ is $\varUpsilon$-\emph{contextual.}
\end{defn}

\noindent Recall that the connections of a coupling of $\mathcal{D}$
are multimaximal if and only if they are multimaximal couplings of
the connections of $\mathcal{D}$. Obviously, $\varUpsilon$-split
representation of a system is a system of binary random variables,
and it is its own and only split representation (up to relabeling
of values).

We will see in the following that at least in some cases the set $\varUpsilon$
need not be mentioned because its choice is determined uniquely by
certain principles, to be formulated in Section \ref{sec:General-principles}.
Even without these principles, however, $\varUpsilon$ obviously need
not be mentioned if the variables in $\mathcal{R}$ are binary to
begin with. If the sets of contents and contexts in such a system
are finite, the contextuality status of the system (as well as measures
of (non)contextuality, not discussed in this paper) can be computed
by linear programming. For the important special case of cyclic systems,
the contextuality status and measures of (non)contextuality can be
determined analytically based on formulas derived in \cite{KujDzhProof2016,DKC2020,KujDzhMeasures}.
A cyclic system of rank $n\in\{2,3,\dots\}$ has $2n$ binary random
variables arranged in bunches $\{R_{i}^{i},R_{i\oplus1}^{i}\}$ for
$i=1,\dots,n$, where $i\oplus1=i+1$ for $i<n$ and $n\oplus1=1$.
Thus, (\ref{eq:1}) and (\ref{eq:order system}) represent cyclic
systems of ranks 4 and 2, respectively. Cyclic systems cover a large
part of traditionally considered systems in physics and behavioral
psychology. CbD allows one to analyze these systems with any amount
of inconsistent connectedness present.

Definition \ref{def:binary-noncontextuality} can be equivalently
stated as follows:
\begin{defn}
A system $\mathcal{R}$ is $\varUpsilon$-\emph{noncontextual} if
it has a coupling $S$ whose $\varUpsilon$-split representation is
multimaximally connected.
\end{defn}

\noindent This version is often preferable, because of the following
observation.
\begin{lem}
A coupling $S$ of a system $\mathcal{R}$, and the $\varUpsilon$-split
representation of $S$ uniquely determine each other.
\end{lem}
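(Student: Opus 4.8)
The plan is to prove the two implications of the lemma separately, the second of them by the same $\pi$-system argument used in the proof of Theorem~\ref{thm:unique}.

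\emph{From $S$ to its split representation.} This direction is immediate. By construction the $\varUpsilon$-split representation of $S=\{S_q^c:q\Yleft c\}$ is $\{\,[S_q^c\in A]:q\Yleft c,\;A\in\varUpsilon_q\,\}$, i.e.\ it is obtained from $S$ by applying, on each factor $E_q$, the fixed map $x\mapsto\bigl([x\in A]\bigr)_{A\in\varUpsilon_q}$. Each coordinate of this map is the indicator of the set $A\in\varUpsilon_q\subseteq\Sigma_q$, hence the map is measurable, and the joint distribution of the split representation is the pushforward of the joint distribution of $S$ under a fixed deterministic measurable map --- so it is completely determined by the distribution of $S$.

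\emph{From the split representation to $S$.} Let $S$ and $S'$ be couplings of $\mathcal{R}$ whose $\varUpsilon$-split representations have the same distribution, and let $\mu$ and $\mu'$ be the distributions of $S$ and $S'$, regarded (via the canonical domain space) as probability measures on $E:=\prod_{(q,c):\,q\Yleft c}E_q$ equipped with the product $\sigma$-algebra $\mathcal{F}:=\bigotimes_{(q,c):\,q\Yleft c}\Sigma_q$. The key step is to invoke part~(iii) of Definition~\ref{def:system dichotomization}: since $\Sigma_q$ is the minimal $\sigma$-algebra containing $\varUpsilon_q$, the standard description of a product $\sigma$-algebra through its generators gives $\mathcal{F}=\sigma(\mathcal{G})$, where $\mathcal{G}$ is the collection of all sets $\{x\in E:x_q^c\in A\}$ with $q\Yleft c$ and $A\in\varUpsilon_q$ (this holds for arbitrary, possibly uncountable, $C$ and $Q$). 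The family $\mathcal{P}$ of all finite intersections of members of $\mathcal{G}$ is then a $\pi$-system with $\sigma(\mathcal{P})=\mathcal{F}$. Moreover $\mu$ and $\mu'$ agree on $\mathcal{P}$: a generic element of $\mathcal{P}$ has the form $\bigcap_{k=1}^{m}\{x:x_{q_k}^{c_k}\in A_k\}$ with $A_k\in\varUpsilon_{q_k}$, and
\[
\mu\bigl(\,\bigcap_{k=1}^{m}\{x:x_{q_k}^{c_k}\in A_k\}\,\bigr)=\Pr\bigl[S_{q_1}^{c_1}\in A_1,\dots,S_{q_m}^{c_m}\in A_m\bigr]=\Pr\bigl[S_{q_1,A_1}^{c_1}=1,\dots,S_{q_m,A_m}^{c_m}=1\bigr],
\]
which is read off directly from the joint distribution of the split representation of $S$, and the same identity holds for $\mu'$. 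Since two probability measures that agree on a $\pi$-system agree on the $\sigma$-algebra it generates \cite{Williams1991}, we conclude $\mu=\mu'$, i.e.\ $S$ and $S'$ are indistinguishable as couplings.

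\emph{Main obstacle.} There is no deep obstacle here; the argument is routine measure theory, and the only points needing attention are conceptual. First, one must recognise that it is precisely condition~(iii) --- the minimality of $\Sigma_q$ over $\varUpsilon_q$ --- that forces the events $\{x_q^c\in A\}$, $A\in\varUpsilon_q$, to generate the entire product $\sigma$-algebra; were it to fail, the split representation could genuinely lose information about $S$. Second, one should notice that only the cylinder probabilities $\Pr[S_{q_1,A_1}^{c_1}=\dots=S_{q_m,A_m}^{c_m}=1]$ (all coordinates equal to $1$) of the split representation enter the argument, so the relevant events indeed form a $\pi$-system and Dynkin's lemma applies verbatim, exactly as in Theorem~\ref{thm:unique}; the remaining entries of the split representation's joint distribution are then automatically consistent with it.
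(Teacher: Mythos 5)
Your proof is correct, but the converse direction takes a genuinely different route from the paper's. The paper argues pointwise via Definition~\ref{def:system dichotomization}(ii): the values of $\{R_{q,A}^{c}:A\in\varUpsilon_{q}\}$ determine the value of $S_{q}^{c}$, so each $S_{q}^{c}$ is a function of the jointly distributed variables of $\mathcal{D}$ and the joint distribution of $S$ is inherited directly. You instead ignore condition~(ii) entirely and work at the level of distributions, invoking condition~(iii) (minimality of $\Sigma_{q}$ over $\varUpsilon_{q}$) to show that the events $\{x:x_{q}^{c}\in A\}$, $A\in\varUpsilon_{q}$, generate the product $\sigma$-algebra, and then applying the $\pi$-system uniqueness lemma exactly as in Theorem~\ref{thm:unique}. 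Your version is arguably the more careful one at the level at which couplings are actually identified (namely, by their distributions): it does not require the pointwise reconstruction map of condition~(ii) to exist or to be measurable, and it makes explicit which hypothesis of Definition~\ref{def:system dichotomization} is doing the work. The paper's version is shorter and yields the stronger pointwise statement (each realization of $S$ is recoverable from the corresponding realization of $\mathcal{D}$), which is what licenses the remark following the lemma about the supports of the two couplings having the same cardinality --- a fact your purely distributional argument does not deliver. Both arguments are sound; they simply lean on different clauses of the definition.
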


\begin{proof}
Given a coupling $S$ (which is a system in its own right), a construction
of its $\varUpsilon$-split representation is given by Definition~\ref{def:system dichotomization}(i).
Conversely, given a $\varUpsilon$-split representation $D$ of a
coupling $S$, Definition~\ref{def:system dichotomization}(ii) implies
that each random variable $S_{c}^{q}$ of $S$ is fully determined
by its representation as $\{R_{q,A}^{c}:A\in\varUpsilon_{q}\}$ in
$D$, and so the joint distribution of all $S_{q}^{c}$ is also determined
as all random variables in $D$ are jointly distributed.
\end{proof}
It is clear from the proof that even though the split representation
$\mathcal{D}$ may be very large, the support of its coupling has
the same cardinality as the support of a coupling $S$ of the original
system $\mathcal{R}$.

We conclude this section with the following simple observation.
\begin{thm}
The definition of contextuality in CbD satisfies Analyticity with
respect to any $\varUpsilon$-split representation.
\end{thm}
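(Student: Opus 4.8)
The plan is to establish Analyticity by producing, for an arbitrary nontrivial format $\mathsf{f}=(\Yleft,\{(E_q,\Sigma_q):q\in Q\})$ and an arbitrary choice $\varUpsilon=\{\varUpsilon_q:q\in Q\}$ of split representation, one concrete system $\mathcal{R}$ of format $\mathsf{f}$ that is at the same time inconsistently connected and $\varUpsilon$-noncontextual. The candidate is the system in which every random variable is \emph{deterministic}: for each $q\Yleft c$ one fixes a value $e_q^c\in E_q$ and lets $R_q^c$ equal $e_q^c$ with probability one. Using nontriviality of $\mathsf{f}$ (not every connection is a singleton) one picks a connection $\mathcal{R}_{q^{\ast}}$ with two members $R_{q^{\ast}}^{c_0},R_{q^{\ast}}^{c_1}$ and chooses $e_{q^{\ast}}^{c_0}$ and $e_{q^{\ast}}^{c_1}$ to have distinct point-distributions on $(E_{q^{\ast}},\Sigma_{q^{\ast}})$, so that $\mathcal{R}$ is inconsistently connected; all remaining values $e_q^c$ are arbitrary. (The only place this last step can fail is when no multi-member connection of $\mathsf{f}$ carries two distinguishable point-distributions, in which case $\mathsf{f}$ admits no inconsistently connected system and there is nothing to prove.)

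Next I would verify that $\mathcal{R}$ is $\varUpsilon$-noncontextual, and the crucial observation is that this does not depend on $\varUpsilon$: every dichotomization of a deterministic variable is again deterministic. Concretely, the $\varUpsilon$-split representation $\mathcal{D}$ of $\mathcal{R}$ consists of the binary variables $R_{q,A}^c=[R_q^c\in A]=[e_q^c\in A]$, each of which is almost surely equal to the constant $0$ or the constant $1$. Take the coupling $S=\{S_q^c\}$ of $\mathcal{R}$ in which each $S_q^c$ equals $e_q^c$ almost surely, all defined on a single one-point sample space; then each bunch $S^c$ has precisely the degenerate distribution of $R^c$, so $S$ is a coupling of $\mathcal{R}$, and its $\varUpsilon$-split representation has, in the connection indexed by $(q,A)$, the family of constant binary variables $\{[e_q^c\in A]:c,\,(q,A)\Yleft c\}$.

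Finally I would appeal to the results on binary couplings to conclude that these families are multimaximal. By Theorem~\ref{thm:pairwise-maximal-construction} the multimaximal coupling of a connection $\{R_{q,A}^c:c\}$ of binary variables is $\{[U\le\Pr(R_{q,A}^c=1)]:c\}$ with $U$ uniform on $[0,1]$; when $R_{q,A}^c$ is deterministic, $\Pr(R_{q,A}^c=1)\in\{0,1\}$, so $[U\le\Pr(R_{q,A}^c=1)]$ is almost surely equal to that same constant, i.e.\ the constant family is itself the (unique, by Theorem~\ref{thm:unique}) multimaximal coupling. Hence the $\varUpsilon$-split representation of $S$ is multimaximally connected, so $\mathcal{R}$ is $\varUpsilon$-noncontextual. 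As $\varUpsilon$ was arbitrary, CbD's definition of contextuality satisfies Analyticity with respect to every $\varUpsilon$-split representation. I do not expect a genuine obstacle here: the substantive content is simply that deterministic systems are trivially couplable, uniformly in the chosen dichotomizations, and the only care needed is the bookkeeping in the degenerate corner case noted above.
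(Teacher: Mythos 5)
Your proposal is correct and follows essentially the same route as the paper's own proof: take a system of deterministic random variables with values varying within a connection, observe that it has a unique (trivial) coupling whose split representation consists of constant binary variables, and note that constant binary families are automatically multimaximal, for any choice of $\varUpsilon$. You merely spell out details the paper leaves implicit (the verification via Theorem~\ref{thm:pairwise-maximal-construction} and the degenerate formats admitting no inconsistently connected system at all).
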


\begin{proof}
Given any format, choose, e.g., $\mathcal{R}$ of this format in which
all random variables are deterministic (making sure their values vary
within a connection so that the system is inconsistently connected).
A unique coupling $S$ then trivially exist for $\mathcal{R}$, and
any connection in this coupling is multimaximal. Any split representation
of $S$ will retain the multimaximality of connections, proving that
$\mathcal{R}$ is $\varUpsilon$-noncontextual with respect to any
$\varUpsilon$.
\end{proof}

\section{\label{sec:General-principles}Why multimaximality, and why dichotomizations?}

CbD is an extension of the traditional understanding of contextuality
effected by two modifications thereof: (1) the replacement of identity
couplings of connections with multimaximal couplings, and (2) the
replacement of systems of random variables with their split representations.
Both these modifications have been justified in previous CbD publications
\cite{DzhCerKuj2017}. We will recapitulate these justifications briefly.

The only alternative to multimaximal coupling proposed in the literature
as a generalization of identity couplings is the notion of a globally
maximal coupling. Such a coupling maximizes the probability of 
\begin{equation}
S_{q}^{c_{1}}=S_{q}^{c_{2}}=\ldots=S_{q}^{c_{k}},
\end{equation}
where $\left\{ c_{1},\ldots,c_{n}\right\} $ are all contexts in which
$q$ is measured (i.e., $\left\{ S_{q}^{c_{1}},\ldots,S_{q}^{c_{k}}\right\} $
is a coupling of an entire connection). This generalization was adopted
in an earlier version of CbD, but abandoned later as it fails to satisfy
the following principle.
\begin{description}
\item [{Noncontextual$\:$Nestedness}] Any subsystem of a noncontextual
system is noncontextual.
\end{description}
A subsystem of a system is created by removing certain random variables
from the system. Consider, e.g., the system
\begin{equation}
\begin{array}{|c|c|c}
\hline R_{1}^{1} & R_{2}^{1} & c=1\\
\hline R_{1}^{2} & R_{2}^{2} & c=2\\
\hline R_{1}^{3} & R_{2}^{3} & c=3\\
\hline R_{1}^{4} & R_{2}^{4} & c=4\\
\hline q=1 & q=2 & \mathcal{R}
\end{array}
\end{equation}
with binary random variables whose bunches are distributed as
\begin{equation}
\begin{array}{ccc}
\begin{array}{c|c|c}
 & R_{2}^{1}=1 & R_{2}^{1}=0\\
\hline R_{1}^{1}=1 & \nicefrac{1}{2} & 0\\
\hline R_{1}^{1}=0 & 0 & \nicefrac{1}{2}
\end{array} &  & \begin{array}{c|c|c}
 & R_{2}^{2}=1 & R_{2}^{2}=0\\
\hline R_{1}^{2}=1 & 0 & \nicefrac{1}{2}\\
\hline R_{1}^{2}=0 & \nicefrac{1}{2} & 0
\end{array}\\
\\
\begin{array}{c|c|c}
 & R_{2}^{3}=1 & R_{2}^{3}=0\\
\hline R_{1}^{3}=1 & 1 & 0\\
\hline R_{1}^{3}=0 & 0 & 0
\end{array} &  & \begin{array}{c|c|c}
 & R_{2}^{4}=1 & R_{2}^{4}=0\\
\hline R_{1}^{4}=1 & 0 & 0\\
\hline R_{1}^{4}=0 & 0 & 1
\end{array}
\end{array}.
\end{equation}
It is easy to show that the maximal probability of $S_{q}^{1}=S_{q}^{2}=S_{q}^{3}=S_{q}^{4}$
is zero for both $q=1$ and $q=2$. Consequently, any coupling of
$\mathcal{R}$ has globally maximal connections. If we adopt the definition
of contextuality based on globally maximal couplings, then this system
is noncontextual. At the same time, the subsystem 
\begin{equation}
\begin{array}{|c|c|c}
\hline R_{1}^{1} & R_{2}^{1} & c=1\\
\hline R_{1}^{2} & R_{2}^{2} & c=2\\
\hline q=1 & q=2 & \mathcal{R}'\subset\mathcal{R}
\end{array}
\end{equation}
is, by the same definition (which in this case coincides with our
Definition \ref{def:binary-noncontextuality}), contextual. In fact,
it has the maximal degree of contextuality among all cyclic systems
of rank 2 \cite{DKC2020}. This example also shows that the use of
globally maximal connections violates another reasonable principle,
formulated next.
\begin{description}
\item [{Deterministic$\:$Redundancy}] Any deterministic random variable
can be deleted from a system without affecting its contextuality status;
and for any $\left(q,c\right)\not\in\Yleft$ ($q$ is not measured
in $c$), one can add a deterministic $R_{q}^{c}$ without affecting
the system's contextuality status.
\end{description}
As we use multimaximality to define contextuality, the Noncontextual
Nestedness principle is satisfied trivially, and it is shown in \cite{Dzh2017Nothing}
that the Deterministic Redundancy principle is satisfied as well.
However, these and other constraints stipulated in this paper do not
determine multimaximality uniquely. A generalization of multimaximally
connected couplings, dubbed \emph{$\mathsf{C}$-couplings}, has in
fact been considered \cite{DzhKuj2.0,Dzh2017Nothing}, in which maximality
of $\Pr\left[S_{q}^{c},S_{q}^{c'}\right]$ is replaced by an arbitrary
property \emph{$\mathsf{C}$} that every pair $\left(S_{q}^{c},S_{q}^{c'}\right)$
in a $\mathsf{C}$-coupling $S$ has to satisfy. Any $\mathsf{C}$-coupling
satisfies the principles of Noncontextual Nestedness and Deterministic
Redundancy. At present, however, we do not know reasonable alternatives
to the maximality of $\Pr\left[S_{q}^{c},S_{q}^{c'}\right]$ as a
realization of property $\mathsf{C}$.

The main reason why CbD requires dichotomization is that outside the
class of binary random variables the notion of contextuality does
not satisfy the following principle.
\begin{description}
\item [{Coarse-graining}] A noncontextual system remains noncontextual
following coarse-graining of its random variables.
\end{description}
A coarse-graining is a measurable function $f_{q}:E_{q}\to E'_{q}$,
for $q\in Q$. Thus, a coarse-graining maps $R_{q}^{c}$ into another
random variable $f_{q}\left(R_{q}^{c}\right)$ by lumping together
certain elements of $E_{q}$ (the set of values of $R_{q}^{c}$),
doing this in the same way for all variables in a connection. Dichotomization
is a special case of coarse-graining, with $E_{q}'=\left\{ 0,1\right\} $.

Consider the following system:
\begin{equation}
\begin{array}{|c|c|c}
\hline R_{1}^{1} & R_{2}^{1} & c=1\\
\hline R_{1}^{2} & R_{2}^{2} & c=2\\
\hline q=1 & q=2 & \mathcal{R}
\end{array},
\end{equation}
with the bunches distributed as
\begin{equation}
\begin{array}{ccc}
\begin{array}{c|c|c|c|c}
 & R_{2}^{1}=1 & R_{2}^{1}=2 & R_{2}^{1}=3 & R_{2}^{1}=4\\
\hline R_{1}^{1}=1 & \nicefrac{1}{2} & 0 & 0 & 0\\
\hline R_{1}^{1}=2 & 0 & 0 & 0 & 0\\
\hline R_{1}^{1}=3 & 0 & 0 & \nicefrac{1}{2} & 0\\
\hline R_{1}^{1}=4 & 0 & 0 & 0 & 0
\end{array} &  & \begin{array}{c|c|c|c|c}
 & R_{2}^{2}=1 & R_{2}^{2}=2 & R_{2}^{2}=3 & R_{2}^{2}=4\\
\hline R_{1}^{2}=1 & 0 & 0 & 0 & 0\\
\hline R_{1}^{2}=2 & 0 & 0 & 0 & \nicefrac{1}{2}\\
\hline R_{1}^{2}=3 & 0 & 0 & 0 & 0\\
\hline R_{1}^{2}=4 & 0 & \nicefrac{1}{2} & 0 & 0
\end{array}\end{array}.
\end{equation}
This system is noncontextual, because any coupling thereof has (multi)maximal
connections. However, if we coarse-grain (here, dichotomize) them
by 
\begin{equation}
f_{1}:\;\downarrow\begin{array}{cccc}
1 & 2 & 3 & 4\\
1 & 1 & 0 & 0
\end{array},f_{2}:\;\downarrow\begin{array}{cccc}
1 & 2 & 3 & 4\\
1 & 1 & 0 & 0
\end{array},
\end{equation}
the bunches of the new system will be distributed as
\begin{equation}
\begin{array}{ccc}
\begin{array}{c|c|c}
 & R_{2}^{1}=1 & R_{2}^{1}=0\\
\hline R_{1}^{1}=1 & \nicefrac{1}{2} & 0\\
\hline R_{1}^{1}=0 & 0 & \nicefrac{1}{2}
\end{array} &  & \begin{array}{c|c|c}
 & R_{2}^{2}=1 & R_{2}^{2}=0\\
\hline R_{1}^{2}=1 & 0 & \nicefrac{1}{2}\\
\hline R_{1}^{2}=0 & \nicefrac{1}{2} & 0
\end{array}\end{array}\,,
\end{equation}
and this system, as already stated in the previous example, is contextual.

By contrast, if a system consists of binary random variables, the
Coarse-graining principle is satisfied trivially. A coarse-graining
of a binary variable maps it into itself (modulo renaming its values)
or into a deterministic variable, attaining a single value with probability
1. The latter cannot violate the Coarse-graining principle because
the CbD definition of contextuality satisfies the Deterministic Redundancy
principle.

If the system to be analyzed is consistently connected, multimaximal
couplings reduce to identity couplings, and (non)contextuality in
CbD properly specializes to the traditional understanding of (non)contextuality
(provided the latter is rigorously stated in terms of couplings).
The choice of a split representation for a consistently connected
system is inconsequential, and may even be omitted as a matter of
convenience.

\section{\label{sec:V}How to choose dichotomizations?}

The intuition behind how one has to do coarse-graining in general
and dichotomization in particular is simple: allowable ``lumping''
should only lump together ``contiguous'' sets of values. This intuition
is captured by the notion of (pre-topologically, or V-) \emph{linked
sets}.\footnote{This is essentially a weak form of pre-topological connectedness,
but we avoid using the latter word to prevent confusing it with its
use in CbD, in such terms as ``multimaximally connected'' or ``consistently
connected,'' derived from the term ``connection'' for the set of
random variables sharing a content.}

We define a \emph{symmetrical Fr\'echet V-space} (see \cite{Sierpinski1956}
for a general theory of Fr\'echet V-spaces) as a non-empty set $E$
endowed with a collection $\mathbb{V}$ of nonempty subsets of $E$,
called \emph{vicinities}. A vicinity $V$ can be called a vicinity
of any element of $V$, and every element of $E$ has to have a vicinity.
The term ``symmetrical'' reflects the fact that if $y$ is in a
vicinity of $x$, then $x$ is in the vicinity of $y.$ A topological
space is a symmetrical V-space with additional properties that we
do not need to use.

Let us illustrate this and related concepts on a simple example. In
a psychophysical experiment described in \cite{CervDzh2019}, a small
visual object (a ``dot'') could be in one of five positions, as
shown, 
\begin{equation}
\boxed{\begin{array}{ccccc}
 &  & *\\
* &  & * &  & *\\
 &  & *
\end{array}},\label{eq:5-points}
\end{equation}
and an observer had to identify the position as \emph{center}, \emph{left},
\emph{right}, \emph{up}, or \emph{down}. Thus the response of the
observer was a 5-valued random variable, and we take this set of 5
values as $E$. Let us associate to each point $x$ of $E$ as its
vicinities all sets consisting of $x$ and its one-step-away neighbor.
For instance, the point \emph{$left$ }has the vicinities $V_{1}=\left\{ left,up\right\} $,
$V_{2}=\left\{ left,center\right\} $, and $V_{3}=\left\{ left,down\right\} $.

To define V-linked sets, we need to remind the concept of \emph{limit
points} (generalized to V-spaces). Given a V-space $E$, a point $x\in E$
is a limit point of a set $F\subseteq E$ if every vicinity of $x$
contains a point of $F$ other than $x$.
\begin{defn}
\label{def:V-linked}A subset $F$ of a V-space $E$ is V-linked if

(i) $F$ is a singleton or a vicinity of some point in $E$;

(ii) $F$ is a union of a V-linked set and a subset of its limit points;

(iii) $F$ is a union of V-linked sets with a nonempty intersection.
\end{defn}

When dealing with a random variable $R_{q}^{c}$ whose set of values
$E_{q}$ is endowed with a sigma algebra $\Sigma_{q}$, the latter
does not generally determine the choice of a V-space $\mathbb{V}_{q}$
for $E_{q}$ uniquely (and vice versa). However, in ``ordinary''
cases, we have a natural choice of $\mathbb{V}_{q}$ and a natural
choice of $\Sigma_{q}$ for $E_{q}$ that satisfy the following definition.
\begin{defn}
Let a random variable $R$ have a set of possible values $E$ endowed
with a V-space $\mathbb{V}$ and a sigma-algebra $\Sigma$. The variable
$R$ is said to be \emph{ordinary} if $\Sigma$ is the smallest sigma-algebra
containing all the vicinities in $\mathbb{V}$.
\end{defn}

In our example (\ref{eq:5-points}), one can check that the smallest
sigma-algebra containing all the vicinities is the power set of $E$
(because every singleton can be obtained by appropriate intersections
of the vicinities).

We are ready now to stipulate the definition that guides our choice
of dichotomizations.
\begin{defn}
\label{def:allowable}An allowable coarse-graining of V-space $E$
is a surjection $f:E\rightarrow E'$ such that $E'$ is a V-space,
and

(i) for any V-linked subset $X$ of $E$, $f\left(X\right)$ is a
V-linked subset of $E'$, and

(ii) for any V-linked subset $Y$ of $E'$, $f^{-1}\left(Y\right)$
is a V-linked subset of $E$.

A dichotomization is a mapping $f:E\rightarrow E'$ where the vicinities
in $E'=\left\{ 0,1\right\} $ are taken to be $\left\{ 0\right\} $,
$\left\{ 1\right\} $, and $\left\{ 0,1\right\} $. So for any $E$,
the dichotomization is allowable if and only if $D_{0}=f^{-1}\left(0\right)$,
$D_{1}=F^{-1}(1)$, and $E=F^{-1}(\{0,1\})$ are V-linked.
\end{defn}

Thus, in our example (\ref{eq:5-points}), there are 15 distinct partitions
of the set into two subsets, and all of them are allowable except
for
\begin{equation}
\boxed{\begin{array}{ccccc}
 &  & \circ\\
\bullet &  & \circ &  & \bullet\\
 &  & \circ
\end{array}},\boxed{\begin{array}{ccccc}
 &  & \bullet\\
\circ &  & \circ &  & \circ\\
 &  & \bullet
\end{array}},
\end{equation}
where the filled circles form non-linked sets $D_{0}$.

The proof of the following statement is obvious.
\begin{thm}
Allowable coarse-grainings are closed under compositions, that is,
if $f:E\to E'$ and $g:E'\to E''$ are allowable coarse-grainings,
then $g\circ f:E\to E''$ is an allowable coarse-graining. In particular,
every allowable dichotomization $d:E'\to\{0,1\}$ of an allowably
coarse-grained V-space $E'=f(E)$ yields an allowable dichotomization
$d\circ f:E\to\{0,1\}$ of the original space $E$.
\end{thm}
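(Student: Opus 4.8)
The plan is to verify the two defining conditions of Definition~\ref{def:allowable} directly for the composite map $g\circ f$, using that $f$ and $g$ each satisfy them. First I would note that $g\circ f$ is a surjection from $E$ onto the V-space $E''$, since a composition of surjections is a surjection and $E''$ is a V-space by hypothesis. For condition (i), I would take an arbitrary V-linked subset $X\subseteq E$; allowability of $f$ gives that $f(X)$ is V-linked in $E'$, and then allowability of $g$ gives that $g(f(X))=(g\circ f)(X)$ is V-linked in $E''$. For condition (ii), I would take an arbitrary V-linked subset $Y\subseteq E''$; allowability of $g$ gives that $g^{-1}(Y)$ is V-linked in $E'$, and then allowability of $f$ gives that $f^{-1}(g^{-1}(Y))=(g\circ f)^{-1}(Y)$ is V-linked in $E$. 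This establishes that $g\circ f$ is an allowable coarse-graining.

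For the ``in particular'' clause, I would invoke the observation made right after Definition~\ref{def:allowable}: a dichotomization $d:E'\to\{0,1\}$ is allowable exactly when the preimages $d^{-1}(0)$, $d^{-1}(1)$, $E'$ are V-linked, which is precisely the statement that $d$ is an allowable coarse-graining whose codomain is the two-point V-space $\{0,1\}$ (with vicinities $\{0\}$, $\{1\}$, $\{0,1\}$). Hence applying the first part with $g=d$ and $E''=\{0,1\}$ shows that $d\circ f:E\to\{0,1\}$ is an allowable coarse-graining onto $\{0,1\}$, i.e., an allowable dichotomization of $E$. The only point to keep in mind is that $E'=f(E)$, so that $d$ is defined on the whole image of $f$ and $d\circ f$ is genuinely surjective onto $\{0,1\}$.

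There is essentially no obstacle here: the argument is a purely formal chase through the definitions, relying only on the set-theoretic identities $(g\circ f)(X)=g(f(X))$ and $(g\circ f)^{-1}(Y)=f^{-1}(g^{-1}(Y))$ together with the stipulated preservation properties of $f$ and $g$. The only place where a little care is warranted — the closest thing to a ``hard part'' — is invoking the preservation of V-linkedness in the correct direction in each half: condition (i) is transported \emph{forward} along $f$ and then $g$, whereas condition (ii) is transported \emph{backward} along $g$ and then $f$, so the order in which the two maps are used is reversed between the two halves of the proof.
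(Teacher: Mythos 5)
Your proof is correct; the paper itself gives no argument, stating only that ``the proof of the following statement is obvious,'' and your definition-chase via $(g\circ f)(X)=g(f(X))$ and $(g\circ f)^{-1}(Y)=f^{-1}(g^{-1}(Y))$ is exactly the routine verification the authors have in mind. Nothing is missing.
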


It follows that if one forms the split representation of a given system
$\mathcal{R}$ by all allowable dichotomizations of each connection,
then the Coarse-graining principle is satisfied. Indeed, a split representation
of a coarse-grained system is merely a subsystem of the split representation
of the original system, because of which if the latter is noncontextual,
then so is the former.

In the case of random variables with linearly ordered sets of values
$E\subseteq\mathbb{R}$, the natural vicinities of $x\in E$ can be
chosen as all intervals $\left\{ z:a<z<b\right\} $ containing $x$,
and the natural sigma-algebra is the Borel sigma-algebra. The only
linked subsets of $E$ are intervals. Thence the allowable dichotomizations
are cuts:
\begin{equation}
\left\{ D_{0}\left(a\right)=\left\{ x:x\leq a\right\} ,D_{1}\left(a\right)=\left\{ x:x>a\right\} \right\} 
\end{equation}
and
\begin{equation}
\left\{ D'_{0}\left(a\right)=\left\{ x:x<a\right\} ,D'_{1}\left(a\right)=\left\{ x:x\geq a\right\} \right\} ,
\end{equation}
for all $a\in E$. One of these two types can be dropped if the other
is used, as shown in the next section.

In the case $E$ is a region of $\mathbb{R}^{n}$, the situation is
more complex, as one may associate with it many ``natural'' but
``uninteresting'' V-spaces, making too many types of dichotomizations
allowable. This leads to all inconsistently connected systems being
contextual, in violation of the Analyticity principle. However, a
variable with values in $\mathbb{R}^{n}$ can always be treated as
$n$ jointly distributed real-valued variables, in which case the
choice reduces to the one previously considered. At present, we do
not know whether there are other approaches to $\mathbb{R}^{n}$-valued
variables that comply with Analyticity.

In the case of a categorical random variable, $E=\left\{ 1,\ldots,r\right\} $,
its V-space involves all possible subsets, the same as its sigma-algebra.
Definition \ref{def:allowable} then allows for all possible dichotomizations.

There seems to be no need to multiply examples, as they are easily
construable.

\section{\label{sec:cuts}Cut dichotomizations of real-valued random variables}

Consider a \emph{single connection} $\left\{ R_{q}^{1},\dots,R_{q}^{n}\right\} $
of a system, with all random variables being defined on the set of
reals endowed with the usual (Borel) V-space and the Borel sigma-algebra.
This includes variables with continuous distribution function, but
also a variety of discrete linearly ordered random variables, such
as spin measurements in quantum physics, which have ordering $(-\frac{1}{2},\frac{1}{2})$
for spin-$\nicefrac{1}{2}$ particles, $(-1,0,1)$ for spin-$1$ particles,
$(-\frac{3}{2},-\frac{1}{2},\frac{1}{2},\frac{3}{2})$ for spin $\nicefrac{3}{2}$-particles,
etc. Thus, virtually all measurements in physics can be modeled using
real-valued random variables. 

Since the content $q$ is fixed, we can drop this subscript and present
our connection as $\left\{ R^{1},\dots,R^{n}\right\} $. In accordance
with Section \ref{sec:V}, we replace $\left\{ R^{1},\dots,R^{n}\right\} $
with the set of its cuts, that is, we form the system of binary random
variables
\begin{equation}
R_{(-\infty,x]}^{k}:=[R^{k}\le x]:=\begin{cases}
1 & \textnormal{if }R^{k}\le x,\\
0 & \text{otherwise,}
\end{cases}\label{eq:cuts}
\end{equation}
for $k=1,\dots,n$ (rows) and $x\in\mathbb{R}$ (columns). We could
also have chosen them as
\begin{equation}
R_{(-\infty,x)}^{k}:=[R^{k}<x]:=\begin{cases}
1 & \textnormal{if }R^{k}<x,\\
0 & \text{otherwise,}
\end{cases}\label{eq:othercuts}
\end{equation}
but it would not make any difference. Indeed, the set of points 
\begin{equation}
\mathbb{R}_{0}=\left\{ x\in\mathbb{R}:\Pr[R^{k}\le x]>\Pr[R^{k}<x]\textnormal{ for some }k\in\left\{ 1,\ldots,n\right\} \right\} 
\end{equation}
is at most countable. Let us indicate the elements of $\mathbb{R}-\mathbb{R}_{0}$
by $\bar{x}$. Consider any coupling $\left\{ S^{1},\ldots,S^{n}\right\} $
of $\left\{ R^{1},\dots,R^{n}\right\} $. Let $a\in\mathbb{R}$ and
$i,i'\in\left\{ 1,\dots,n\right\} $ be arbitrary. Using the right-continuity
of the distribution functions,
\begin{equation}
\Pr\left[S^{i}\leq a,S^{i'}\leq a\right]=\lim_{\bar{x}\to a^{+}}\Pr\left[S^{i}\leq\bar{x},S^{i'}\leq\bar{x}\right],
\end{equation}
for any $a\in\mathbb{R}$. 

But then
\begin{equation}
\Pr\left[S^{i}\leq\bar{x},S^{i'}\leq\bar{x}\right]=\min\left\{ \Pr\left[S^{i}\leq\bar{x}\right],\Pr\left[S^{i'}\leq\bar{x}\right]\right\} 
\end{equation}
for all $\bar{x}\in\mathbb{R}-\mathbb{R}_{0}$ implies
\begin{equation}
\Pr\left[S^{i}\leq a,S^{i'}\leq a\right]=\min\left\{ \Pr\left[S^{i}\leq a\right],\Pr\left[S^{i}\leq a\right]\right\} ,
\end{equation}
for all $a\in\mathbb{R}$. By Theorem~\ref{thm:multimax-equi} (statement
\ref{multimax-equi-11}), this means that if the cuts are defined
by (\ref{eq:cuts}), multimaximality of the split representation of
$\left\{ S^{1},\ldots,S^{n}\right\} $ is implied by the multimaximality
of the same split representation from which all connections corresponding
to $x\in\mathbb{R}_{0}$ are removed. Since the reverse implication
is trivial, we can replace the implication with equivalence. We can
analogously prove the same for the split representations defined by
(\ref{eq:othercuts}), using the left-continuity instead of the right-continuity.

Let us therefore choose (\ref{eq:cuts}) for subsequent analysis,
and let us write $R_{(-\infty,x]}^{k}$ more conveniently as $R_{x}^{k}$.
\begin{thm}
\label{thm:continuous-cuts}The split representation of a single connection
formed by cuts as given by (\ref{eq:cuts}) is noncontextual.
\end{thm}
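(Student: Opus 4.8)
The key realization is that Theorem~\ref{thm:continuous-cuts} asserts that a single connection always admits a multimaximally connected coupling after cut-dichotomization. The plan is to construct such a coupling explicitly using the inverse-transform (quantile) representation, which realizes all the binary cut variables as monotone functions of a single uniform random variable, and then invoke Theorem~\ref{thm:pairwise-maximal-construction}.

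First I would let $F^k(x) = \Pr[R^k \le x]$ be the distribution function of $R^k$, and define $G^k(u) = \inf\{x \in \mathbb{R} : F^k(x) \ge u\}$ to be its quantile function, for $k=1,\dots,n$. Let $U$ be uniform on $[0,1]$ and set $S^k := G^k(U)$. Standard facts about the quantile transform give that $S^k \overset{d}{=} R^k$, so $\{S^1,\dots,S^n\}$ is a coupling of the connection $\{R^1,\dots,R^n\}$ (all defined on the same sample space carrying $U$). The corresponding cut variables are $S^k_x = [S^k \le x] = [G^k(U) \le x]$. Using the Galois-type equivalence $G^k(u) \le x \iff u \le F^k(x)$ (valid for the left-continuous quantile function and right-continuous $F^k$), we get $S^k_x = [U \le F^k(x)] = [U \le \Pr(S^k \le x)]$.

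This is precisely the form of coupling appearing in Theorem~\ref{thm:pairwise-maximal-construction}: for each fixed column $x$, the binary variables $\{S^k_x : k = 1,\dots,n\}$ are of the form $[U \le \Pr(X = 1)]$ with a common $U$, hence form the multimaximal coupling of their marginals. But more is needed: multimaximality of the \emph{entire} split representation requires that every pair $S^k_x, S^{k'}_{x'}$ (possibly with $x \ne x'$) be a maximal coupling. By Theorem~\ref{thm:multimax-equi}(statement~\ref{multimax-equi-11}), it suffices to check that $\Pr[S^k_x = 1, S^{k'}_{x'} = 1] = \min\{\Pr[S^k_x=1], \Pr[S^{k'}_{x'}=1]\}$. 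Since $S^k_x = [U \le F^k(x)]$ and $S^{k'}_{x'} = [U \le F^{k'}(x')]$ are both downward-threshold events on the same $U$, their intersection is $[U \le \min\{F^k(x), F^{k'}(x')\}]$, which has probability $\min\{F^k(x), F^{k'}(x')\}$, exactly as required. Thus all pairs are maximally coupled, so the split representation of $\{S^1,\dots,S^n\}$ is multimaximally connected, and the connection is noncontextual.

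The main obstacle, such as it is, is purely bookkeeping: one must be careful that the quantile identity $G^k(u) \le x \iff u \le F^k(x)$ holds without exceptional null sets, and that the indexing set of ``contents'' in the split representation (pairs $(q,(-\infty,x])$ ranging over all real $x$, and here all collapsed to a single $q$) is handled correctly so that Theorem~\ref{thm:multimax-equi}, which is stated for an arbitrary index set $I$, applies verbatim with $I = \{1,\dots,n\} \times \mathbb{R}$. Everything else is a direct consequence of the already-established Theorems~\ref{thm:multimax-equi} and~\ref{thm:pairwise-maximal-construction}; in fact the cleanest write-up simply observes that the family $\{S^k_x\}$ has exactly the structure of the canonical multimaximal coupling of Theorem~\ref{thm:pairwise-maximal-construction}, and invokes its uniqueness (Theorem~\ref{thm:unique}) if one wishes to identify it.
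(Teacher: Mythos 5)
Your proof is correct and follows essentially the same route as the paper: the paper's own proof also constructs the coupling via the quantile transform, writing $S_{x}^{k}=[F_{k}^{-1}(U)\le x]=[U\le F_{k}(x)]$ for a single uniform $U$, and then invokes Theorem~\ref{thm:pairwise-maximal-construction} to conclude that the entire family (hence each connection) is multimaximal. Your additional care with the Galois equivalence and the explicit pairwise check is just a more detailed write-up of the same argument.
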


\begin{proof}
This system has the coupling $\{S_{x}^{k}:k=1,\dots,n,\ x\in\mathbb{R}\}$
where
\[
S_{x}^{k}=[F_{k}^{-1}(U)\le x]=[U\le F_{k}(x)],
\]
$U$ is a $[0,1]$ uniform random variable, and $F_{k}$ and $F_{k}^{-1}$
are, respectively, the cumulative distribution function and the quantile
function of $R^{k}.$ As $F_{k}^{-1}(U)\mathop{=}\limits ^{d}R^{k}$,
the first equality implies that this is indeed a coupling of the system.
The second equality implies, by Theorem~\ref{thm:pairwise-maximal-construction},
that this coupling is multimaximal. As the whole coupling is multimaximal,
its connections are also multimaximal, and the system is noncontextual.
\end{proof}
This is quite a difference from considering all possible dichotomizations,
which leads to all inconsistently connected single connections to
be contextual \cite{DzhCerKuj2017}. Since a single connection can
be viewed as a system of random variables, generally inconsistently
connected, the theorem shows that split representations formed by
cuts satisfy the Analyticity principle. Of course, a system consisting
of more than one connection may very well be contextual.

Consider an arbitrary coupling $\left\{ S^{1},\dots,S^{n}\right\} $
of a single connection $\left\{ R^{1},\dots,R^{n}\right\} $ such
that the split representation of this coupling is multimaximally connected.
Theorem \ref{thm:continuous-cuts} says such couplings exist, but
the specific coupling constructed in this theorem is not the only
possible multimaximally connected coupling. We will now analyze the
constraints a coupling $\{S_{x}^{k}:k=1,\dots,n,\ x\in\mathbb{R}\}$
with multimaximal connections imposes on the joint distributions of
the coupling $\left\{ S^{1},\dots,S^{n}\right\} $. Let us choose
two arbitrary elements of the coupling and denote them $S^{i}$ and
$S^{j}$, and let $F_{i}$ and $F_{j}$ be their respective distribution
functions (i.e., the distribution functions of $R^{i}$ and $R^{j}$).

\begin{figure}
\begin{centering}
\begin{tikzpicture}[>=latex]
\draw[thick, ->] (0,0) -- node[at end,below] {$S^i$} (8.5,0);
\draw[thick, ->] (0,3) -- node[at end,left] {$S^j$} (0,8.5);
\draw[thick, dotted, fill=lightgray] (0,0) -- (0,3) -- (3,3) -- cycle;
\draw[thick] (0,3) -- (3,3);
\draw[thick, dotted, fill=lightgray] (3.5,3.5) -- (5,3.5) -- (5,5) -- cycle;
\draw[thick] (5,3.5) -- (5,5);
\draw[thick, dotted, fill=lightgray] (5,5) -- (5,7) -- (7,7) -- cycle;
\draw[thick] (5,7) -- (7,7);
\draw[thick, dotted, fill=lightgray] (7,7) -- (8,7) -- (8,8) -- cycle;
\draw[thick] (8,7) -- (8,8);
\draw[thick] (0,0) -- (8,8);
\path
(2,6) node(lt) {$F_{i}(x) < F_{j}(x)$}
(7,2) node(eq) {$F_{i}(x) = F_{j}(x)$}
(4.5,1) node(gt) {$F_{i}(x) > F_{j}(x)$};
\draw[->] (gt) .. controls +(-2.5,0) ..  (1.5,1.5);
\draw[->] (gt) .. controls +(2,4) ..  (6,6);
\draw[->] (eq) .. controls +(-2,0) ..  (3.25,3.25);
\draw[->] (eq) .. controls +(0,2) ..  (5,5);
\draw[->] (eq) .. controls +(1,3) ..  (7,7);
\draw[->] (lt) .. controls +(1,-1) .. (4.25,4.25);
\draw[->] (lt) .. controls +(1,2) and +(-1,1) .. (7.5,7.5);
\end{tikzpicture}
\par\end{centering}
\caption{\label{fig:multimaximal-cuts}Illustration for Theorem \ref{prop:multimaximal-cuts}.
The shaded areas and solid lines contain the support of the joint
distribution of $\left(S^{i},S^{j}\right)$ whose split representation
has maximal connections.}
\end{figure}
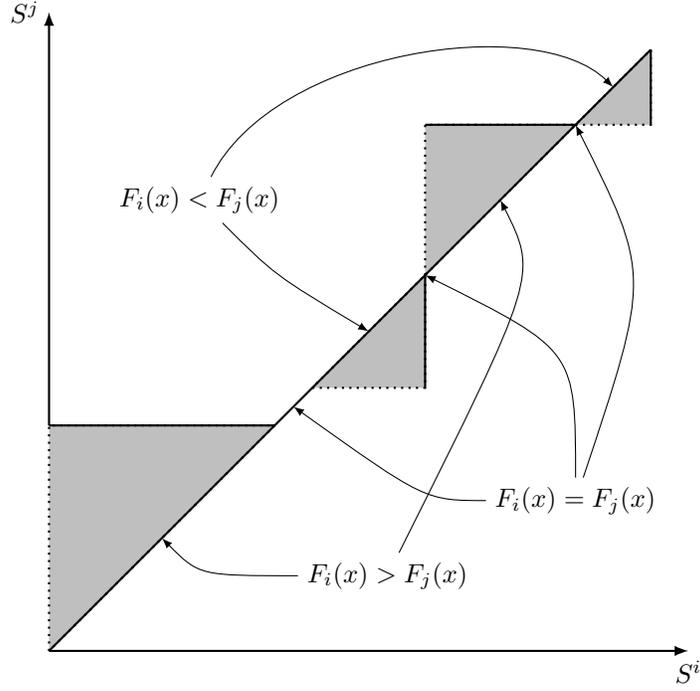

Suppose we have a cut point $x\in\mathbb{R}$ such that $F_{i}(x)>F_{j}(x)$.
For the dichotomized variables $S_{x}^{i}=[S^{i}\leq x]$ and $S_{x}^{j}=[S^{j}\leq x]$
this means $\Pr[S_{x}^{i}=1]>\Pr[S_{x}^{j}=1]$, and by Theorem~\ref{thm:multimax-equi}
(statement \ref{enu:multimax-equi-10}) the pair $\{S_{x}^{i},S_{x}^{j}\}$
is maximal if and only if $\Pr[S_{x}^{i}=0,S_{x}^{j}=1]=0$. This
is equivalent to $\Pr\left[S^{i}>x,\,S^{j}\leq x\right]=0$, i.e.,
the joint distribution of $(S^{i},S^{j})$ vanishes on the set $(x,\infty)\times(-\infty,x]$.

By symmetry, for a cut point $x\in\mathbb{R}$ such that $F_{i}(x)<F_{j}(x)$,
we have $\{S^{i},S^{j}\}$ maximal if and only if the joint distribution
of $(S^{i},S^{j})$ vanishes on the set $(-\infty,x]\times(x,\infty)$.

If $F_{i}(x)=F_{j}(x)$, we have $\Pr[S_{x}^{i}=1]=\Pr[S_{x}^{j}=1]$,
and $\{S^{i},S^{j}\}$ is maximal if and only if $\Pr[S_{x}^{i}=0,S_{x}^{j}=1]=\Pr[S_{x}^{i}=1,S_{x}^{j}=0]=0$,
implying that the joint distribution of $(S^{i},S^{j})$ vanishes
in the set $(-\infty,x]\times(x,\infty)\cup(x,\infty)\times(-\infty,x]$.

Let us denote
\begin{equation}
K=\bigcup_{x\in\mathbb{R}}\underbrace{\begin{cases}
(-\infty,x]\times(x,\infty)\cup(x,\infty)\times(-\infty,x] & \textnormal{if }F_{i}(x)=F_{j}(x),\\
(-\infty,x]\times(x,\infty) & \textnormal{if }F_{i}(x)<F_{j}(x),\\
(x,\infty)\times(-\infty,x] & \textnormal{if }F_{i}(x)>F_{j}(x).
\end{cases}}_{=K_{x}}\label{eq:setK}
\end{equation}
The union does not change if instead of all $x\in\mathbb{R}$, we
take it only over the union of a countable dense subset of $\mathbb{R}$
and the (at most countable set of) boundary points of the sets $\{x\in\mathbb{R}:F_{i}(x)<F_{j}(x)\}$
and $\{x\in\mathbb{R}:F_{i}(x)>F_{j}(x)\}$. Since the distribution
of $(S^{i},S^{j})$ vanishes on each $K_{x}$, it also vanishes for
the countable union of such sets. Thus, we have the following result.
\begin{thm}
\label{prop:multimaximal-cuts}The split representation of a coupling
$\left\{ S^{1},\dots,S^{n}\right\} $ of the single connection $\left\{ R^{1},\dots,R^{n}\right\} $
of a system is multimaximally connected if and only if, for any $i,j\in\left\{ 1,\ldots,n\right\} $,
the joint distribution of $(S^{i},S^{j})$ vanishes on the set $K$
given by (\ref{eq:setK}).
\end{thm}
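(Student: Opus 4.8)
The plan is to unfold ``multimaximally connected'' into a family of pairwise conditions indexed by cut points, to translate each such condition into a statement about where the joint law of $(S^i,S^j)$ may place mass, and then to collapse the (a priori uncountable) family of conditions into the single statement that this law vanishes on $K$. Throughout I write $\mu_{ij}$ for the joint law of $(S^i,S^j)$ (well defined, since the $S^k$ are jointly distributed) and $F_k$ for the common distribution function of $R^k$ and $S^k$.

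First, by Definition~\ref{def:binary-noncontextuality} together with the remark that a coupling has multimaximal connections iff each of its connections is a multimaximal coupling of the corresponding connection, the split representation $\{S^k_x:k=1,\dots,n,\ x\in\mathbb{R}\}$ is multimaximally connected iff, for every single cut point $x$, the binary vector $\{S^k_x:k=1,\dots,n\}$ is a multimaximal coupling of $\{R^k_x:k=1,\dots,n\}$ (here $\Pr[S^k_x=1]=F_k(x)=\Pr[R^k_x=1]$). By the equivalence of statements $1$ and $4$ in Theorem~\ref{thm:multimax-equi}, this holds iff, for every $x$ and every pair $i,j$, the pair $\{S^i_x,S^j_x\}$ is a maximal coupling. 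The three paragraphs preceding the theorem already carry out the remaining translation: applying statement $4$ of Theorem~\ref{thm:multimax-equi} and splitting into the cases $F_i(x)<F_j(x)$, $F_i(x)=F_j(x)$, $F_i(x)>F_j(x)$ shows that $\{S^i_x,S^j_x\}$ is a maximal coupling iff $\mu_{ij}(K_x)=0$, with $K_x$ as in (\ref{eq:setK}). Hence the split representation is multimaximally connected iff $\mu_{ij}(K_x)=0$ for every pair $i,j$ and every $x\in\mathbb{R}$.

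It therefore remains to show that ``$\mu_{ij}(K_x)=0$ for all $x$'' is equivalent to ``$\mu_{ij}(K)=0$.'' One direction is immediate: since $K_x\subseteq K=\bigcup_{x\in\mathbb{R}}K_x$, vanishing on $K$ forces vanishing on each $K_x$, which by the previous paragraph is already multimaximal connectedness; this is the ``if'' half. For the converse I must pass from the uncountable family of null sets $K_x$ to the null set $K$, and here ordinary $\sigma$-additivity does not apply directly. I would use the reduction sketched just before the theorem: replace $\mathbb{R}$ by a countable dense subset together with the transition points of the sign of $F_i-F_j$. Concretely, for the upper part $K\cap\{s<t\}=\bigcup_{x\in T}(-\infty,x]\times(x,\infty)$ with $T=\{x:F_i(x)\le F_j(x)\}$, right-continuity of $F_i-F_j$ makes $\{x:F_i(x)>F_j(x)\}$ a countable disjoint union $\bigsqcup_n[a_n,b_n)$ of half-open intervals with $b_n\in T$, and a cut point $x$ with $F_i(x)<F_j(x)$ can be relocated rightward to a rational one using right-continuity; combining these, one writes $K\cap\{s<t\}$, up to a $\mu_{ij}$-null set, as a countable union of such corners and applies $\sigma$-additivity. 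The symmetric treatment of $K\cap\{s>t\}$ uses left-continuity, and $K$ avoids the diagonal, so $\mu_{ij}(K)=0$.

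I expect this last step to be the only genuine obstacle. Everything up to the identity ``multimaximally connected $\iff$ $\mu_{ij}(K_x)=0$ for all $i,j$ and $x$'' is bookkeeping together with quotation of Theorem~\ref{thm:multimax-equi} and the computations already displayed before the theorem; the measure-theoretic content lies entirely in reducing the uncountable union. The delicate case is the cut points where $F_i(x)=F_j(x)$, since there neither one-sided limit on its own relocates the corner $K_x$ to a more convenient cut point; handling these requires the interval/component structure of $\{x:F_i(x)>F_j(x)\}$ just mentioned, and this is where the brief sentence in the text deserves to be spelled out with some care.
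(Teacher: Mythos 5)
Your proposal is correct and follows essentially the same route as the paper: the pairwise, per-cut-point translation of multimaximality into ``the law of $(S^i,S^j)$ vanishes on $K_x$'' is exactly the computation displayed in the three paragraphs preceding the theorem, and the only remaining content is reducing the uncountable union $\bigcup_{x}K_x$ to a countable union of null sets, which the paper likewise does by passing to a countable dense set together with the transition points of the sign of $F_i-F_j$. You are, if anything, more explicit than the paper about where the measure-theoretic care (right-/left-continuity and the component structure of $\{x:F_i(x)>F_j(x)\}$) actually enters.
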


The region left after removing the set indicated in the theorem is
shown in Figure~\ref{fig:multimaximal-cuts} for a situation when
each of the sets $\left\{ x\in\mathbb{R}:F_{i}(x)=F_{j}(x)\right\} $,
$\left\{ x\in\mathbb{R}:F_{i}(x)<F_{j}(x)\right\} $, and $\left\{ x\in\mathbb{R}:F_{i}(x)>F_{j}(x)\right\} $
is a union of disjoint intervals (including isolated single-point
ones). The statement of Theorem \ref{prop:multimaximal-cuts} holds,
however, in complete generality.

\section{\label{sec:categorical}Split representation for categorical random
variables}

For categorical random variables, Definition \ref{def:allowable}
leads us to use all possible dichotomizations for split representations
of systems. The following notion was introduced in \cite{DzhCerKuj2017}.
\begin{defn}
Given two probability mass functions $p$ and $q$ on the set $\{1,\dots,k\}$
we say that $p$ \emph{nominally dominates} $q$ if and only if $p(i)<q(i)$
for at most one index $i\in\{1,\dots,k\}$. If $A$ and $B$ are random
variables such that the distribution of $A$ nominally dominates the
distribution of $B$ we write $A\succcurlyeq B$.
\end{defn}

The significance of this notion is due to the following result obtained
in \cite{DzhCerKuj2017}. (As we consider single connections of systems,
or single-connection systems, in the remainder of this section, we
continue to drop fixed subscripts indicating contents in their notation,
writing $\left\{ R^{1},\dots,R^{n}\right\} $ instead of $\left\{ R_{q}^{1},\dots,R_{q}^{n}\right\} $.)
\begin{thm}
The split representation of a single connection $\{R^{1},R^{2}\}$
of two categorical random variables with values in $\{1,\dots,$k\}
is noncontextual if and only if $R^{1}\succcurlyeq R^{2}$ or $R^{1}\preccurlyeq R^{2}$.
\end{thm}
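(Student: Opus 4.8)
The plan is to recast the statement as a question about a single joint distribution. Write $p$ and $q$ for the laws of $R^{1}$ and $R^{2}$ on $\{1,\dots,k\}$; a coupling of $\{R^{1},R^{2}\}$ is then just a probability distribution $\pi$ on $\{1,\dots,k\}^{2}$ with marginals $p$ and $q$. By the lemma relating a coupling to its split representation and by Theorem~\ref{thm:multimax-equi}(\ref{multimax-equi-11}) applied to each binary pair $\{[R^{1}\in A],[R^{2}\in A]\}$, the split representation is noncontextual if and only if some such $\pi$ satisfies $\pi(A\times A)=\min\{p(A),q(A)\}$ for every $A\subseteq\{1,\dots,k\}$, equivalently $\pi(A\times A^{c})=(p(A)-q(A))^{+}$ and $\pi(A^{c}\times A)=(q(A)-p(A))^{+}$ for every $A$. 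Set $P^{<}=\{i:p(i)<q(i)\}$ and $P^{>}=\{i:p(i)>q(i)\}$; then $R^{1}\succcurlyeq R^{2}$ means $|P^{<}|\le 1$ and $R^{1}\preccurlyeq R^{2}$ means $|P^{>}|\le 1$, so the claim becomes: an admissible $\pi$ exists iff $|P^{<}|\le 1$ or $|P^{>}|\le 1$.

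For the ``if'' direction, say $|P^{<}|\le 1$ and let $a$ be the element of $P^{<}$, if any, or an arbitrary index otherwise. Define $\pi(a,a)=p(a)$, $\pi(i,i)=q(i)$ and $\pi(i,a)=p(i)-q(i)\ge0$ for $i\ne a$, and all remaining entries $0$. Checking row and column sums shows $\pi$ is a coupling (using $\sum_{i\ne a}(p(i)-q(i))=q(a)-p(a)$), and verifying $\pi(A\times A)=\min\{p(A),q(A)\}$ splits into the short cases $a\in A$ and $a\notin A$. The case $|P^{>}|\le 1$ follows by transposing $\pi$.

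For the ``only if'' direction I argue the contrapositive: assuming $|P^{<}|\ge 2$ and $|P^{>}|\ge 2$, I show that no admissible $\pi$ exists. Applying the constraint to singletons $A=\{i\}$ forces the $i$-th row of $\pi$ to be concentrated at $(i,i)$ when $i\in P^{<}$ and the $i$-th column to be concentrated at $(i,i)$ when $i\in P^{>}$; hence the only off-diagonal entries that can be positive are the $\pi(c,a)$ with $c\in P^{>}$ and $a\in P^{<}$. Applying the constraint to the two-element sets $A=\{c,a\}$, $c\in P^{>}$, $a\in P^{<}$ — with a three-way case split on the sign of $p(A)-q(A)=(p(c)-q(c))-(q(a)-p(a))$ — forces $\pi(c,a)=\min\{p(c)-q(c),\,q(a)-p(a)\}>0$ and, in addition, freezes the whole off-diagonal part of row $c$ into the single entry $(c,a)$ whenever $p(c)-q(c)\le q(a)-p(a)$, and freezes the off-diagonal part of column $a$ into $(c,a)$ whenever $p(c)-q(c)\ge q(a)-p(a)$. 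Since $|P^{<}|\ge 2$ forces each row $c\in P^{>}$ to carry at least two positive off-diagonal entries, the first alternative is impossible for every $a$, so $p(c)-q(c)>q(a)-p(a)$ for all such pairs; the symmetric column argument, using $|P^{>}|\ge 2$, gives $q(a)-p(a)>p(c)-q(c)$ for all such pairs, a contradiction.

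The heart of the argument — and its only real difficulty — is this last step: identifying the right finite subfamily of the $\min$-constraints (the singletons and the pairs $\{c,a\}$) and running the three-way case analysis on $A=\{c,a\}$ while keeping track of which rows and columns get frozen to a single off-diagonal entry. The reformulation in terms of $\pi$ and the explicit construction in the ``if'' direction are routine bookkeeping once the problem has been phrased this way.
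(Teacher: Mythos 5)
Your argument is correct. Note first that the paper itself gives no proof of this theorem: it is stated as a result imported from \cite{DzhCerKuj2017}, so there is no internal proof to compare against. Your reduction is the natural one and is carried out soundly: by the lemma identifying couplings of a system with couplings of its split representation, and by Theorem~\ref{thm:multimax-equi}(4a) applied to each two-element connection $\{[R^{1}\in A],[R^{2}\in A]\}$, noncontextuality is exactly the existence of a joint distribution $\pi$ with marginals $p,q$ satisfying $\pi(A\times A)=\min\{p(A),q(A)\}$ for all $A$. Your explicit coupling in the ``if'' direction checks out (row and column sums are right, and the two cases $a\in A$, $a\notin A$ both give the required minimum because $p\ge q$ off $a$ forces $p(A)\ge q(A)$ when $a\notin A$ and $p(A)\le q(A)$ when $a\in A$). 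In the ``only if'' direction, the singleton constraints correctly confine positive off-diagonal mass to $P^{>}\times P^{<}$, the pair constraint correctly evaluates to $\pi(c,a)=\min\{p(c)-q(c),\,q(a)-p(a)\}>0$ (since the total off-diagonal mass of row $c$ is $p(c)-q(c)$ and of column $a$ is $q(a)-p(a)$, the ``freezing'' dichotomy is exactly as you state), and the final clash between the row and column freezing conditions is a genuine contradiction. The only stylistic remark: the contradiction already follows from a single pair $(c,a)$ once both universally quantified inequalities are established, so the last two sentences could be compressed. As it stands, this is a complete, self-contained combinatorial proof of a statement the paper only cites.
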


Since $k\le3$ implies that $R^{1}\succcurlyeq R^{2}$ or $R^{1}\preccurlyeq R^{2}$
always holds, the split representation of a connection of two categorical
random variables is always noncontextual for $k=3$. For more than
two random variables, this is no longer the case. For instance, we
have the following observation.
\begin{example}
\label{exa:contextual-3-values}The split representation of all possible
dichotomizations of a system consisting of a single connection $\{R^{1},R^{2},R^{3}\}$
with values distributed as 
\[
\begin{array}{cccc}
 & 1 & 2 & 3\\
R^{1} & \nicefrac{1}{2} & \nicefrac{1}{2} & 0\\
R^{2} & 0 & \nicefrac{1}{2} & \nicefrac{1}{2}\\
R^{3} & \nicefrac{1}{2} & 0 & \nicefrac{1}{2}
\end{array}
\]
is contextual. This is the same system that was used as an example
of a set of multi-valued random variables that does not have a multimaximal
coupling in \cite{DzhKuj2.0}--- noncontextuality of the split representation
of all possible dichotomizations of a connection implies the existence
of a multimaximal coupling of the original connection.
\end{example}

Let us consider next a connection $\mathcal{R}=\{R^{1},\dots,R^{n}\}$
with the value set $\{1,\dots,k\}$, $k\ge3$. The split representation
is contextual only if $R^{c}\succcurlyeq R^{c'}$ or $R^{c}\preccurlyeq R^{c'}$
for all pairs $c,c'\in\{1,\dots,n\}$. This is a necessary condition
only. We obtain a rather weak sufficient condition if we impose the
following stringent constraints on the ordering of the probability
distributions. Let us call $\mathcal{R}$ \emph{dominance-aligned}
if, for some permutation $\left\{ k_{1},\ldots,k_{n}\right\} $ of
$\left\{ 1,\ldots,n\right\} $, the ordering
\begin{equation}
\Pr[R^{k_{1}}=i]\le\Pr[R^{k_{2}}=i]\le\dots\le\Pr[R^{k_{n}}=i]
\end{equation}
holds for all but one value of $i\in\left\{ 1,\ldots,n\right\} $.
It is clear that for the exceptional value of $i$ (which, with no
loss of generality, can be taken as $i=1$), the ordering is opposite,
\begin{equation}
\Pr[R^{k_{1}}=i]\geq\Pr[R^{k_{2}}=i]\geq\dots\geq\Pr[R^{k_{n}}=i].
\end{equation}
 Without loss of generality then $\left\{ k_{1},\ldots,k_{n}\right\} $
can be taken to be $\left\{ 1,\ldots,n\right\} $, which we will assume
in the following proposition.
\begin{thm}
If a single connection $\mathcal{R}=\{R^{1},\dots,R^{n}\}$ is dominance-aligned,
the split representation of $\mathcal{R}$ is noncontextual.
\end{thm}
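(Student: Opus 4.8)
The plan is to construct an explicit multimaximally connected coupling of the split representation of $\mathcal{R}$ by coupling the variables $R^1,\dots,R^n$ themselves through a single shared source of randomness, and then invoking Theorem~\ref{thm:multimax-equi} (statement~\ref{multimax-equi-11}) dichotomization by dichotomization. Concretely, since $\mathcal{R}$ is dominance-aligned with the exceptional value taken to be $i=1$, I have
\[
\Pr[R^1=i]\le\Pr[R^2=i]\le\dots\le\Pr[R^n=i]\quad\text{for }i\in\{2,\dots,k\},
\]
and the reversed chain for $i=1$. The idea is to build a coupling $\{S^1,\dots,S^n\}$ on $[0,1]$ (with Lebesgue measure) by partitioning $[0,1]$ into an interval $I_1^k$ carrying value $1$ and intervals $I_2^k,\dots,I_k^k$ carrying values $2,\dots,k$, arranged so that the interval for value $1$ shrinks monotonically as $k$ increases while each interval for a value $i\ge2$ grows monotonically and in a \emph{nested} fashion (the region mapped to value $i$ by $S^c$ is contained in the region mapped to value $i$ by $S^{c'}$ whenever $c\le c'$). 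Because the value-$1$ intervals are also nested (decreasing), every value region is nested across the connection. Define $S^c(u)$ to be the value whose interval contains $u$; then $S^c\overset{d}{=}R^c$ by construction, so this is a coupling of $\mathcal{R}$, hence (via Definition~\ref{def:system dichotomization}(i)) induces a coupling of the split representation $\mathcal{D}$.

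The next step is to verify that this coupling is multimaximally connected, i.e.\ that for every dichotomization $A\subseteq\{1,\dots,k\}$ and every pair $c,c'$, the pair $\{[S^c\in A],[S^{c'}\in A]\}$ is a maximal coupling. By Theorem~\ref{thm:multimax-equi}\ref{multimax-equi-11} this amounts to showing $\Pr[S^c\in A,\ S^{c'}\in A]=\min\{\Pr[R^c\in A],\Pr[R^{c'}\in A]\}$. Now $\{S^c\in A\}=\bigcup_{i\in A}I_i^c$ (with ``value $i$ region under $S^c$'' written $I_i^c$), and by the nestedness just arranged, for each fixed $i$ one of $I_i^c\subseteq I_i^{c'}$ or $I_i^{c'}\subseteq I_i^c$ holds. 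The key claim is that this nestedness can be made \emph{uniform in $i$}: there is a single ordering (WLOG $c\le c'$) such that for every $i\in\{2,\dots,k\}$ we have $I_i^c\subseteq I_i^{c'}$, and simultaneously $I_1^{c'}\subseteq I_1^c$. Granting this, $\{S^c\in A\}$ and $\{S^{c'}\in A\}$ are nested sets for every $A$: if $1\notin A$ then $\bigcup_{i\in A}I_i^c\subseteq\bigcup_{i\in A}I_i^{c'}$, and if $1\in A$, write $A=\{1\}\cup A'$ and observe $\bigcup_{i\in A}I_i^{c'}=I_1^{c'}\cup\bigcup_{i\in A'}I_i^{c'}$; since $I_1^{c'}$ is the leftmost block shrinking from the left and the $I_i^{c'}$ for $i\in A'$ contain their $c$-counterparts, a short check of the block geometry gives containment of one union in the other. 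For nested events, $\Pr[\text{smaller}\cap\text{larger}]=\Pr[\text{smaller}]=\min$, which is exactly the maximal-coupling condition.

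The main obstacle is exactly that uniformity claim: I must actually lay down the intervals $I_1^c,\dots,I_k^c$, one partition of $[0,1]$ per context $c$, realizing the prescribed marginals, so that all value-$i$ regions ($i\ge2$) are nested in one direction and the value-$1$ regions nested in the other, \emph{with a consistent orientation across all $i$ simultaneously}. This is where dominance-alignment is essential: because the chain of inequalities $\Pr[R^1=i]\le\dots\le\Pr[R^n=i]$ points the same way for all $i\ge2$ (and oppositely for $i=1$), I can anchor value $1$ at the left end of $[0,1]$, stack values $2,\dots,k$ to its right in a fixed order, and grow each block only by extending its right boundary as $c$ increases (the total slack freed by shrinking block $1$ being exactly $\sum_{i\ge2}(\Pr[R^{c+1}=i]-\Pr[R^c=i])$, which matches because the marginals are probability distributions). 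One must check that the right boundary of block $i$ under $S^{c+1}$ is $\ge$ that under $S^c$ and the left boundary likewise is $\ge$, giving $I_i^c\subseteq I_i^{c+1}$; telescoping the partial sums of the (sorted) marginal values does this. Once the construction is pinned down the rest is the routine min-computation above, so I would spend the bulk of the written proof on specifying the interval endpoints and checking nestedness, then close with the one-line appeal to Theorem~\ref{thm:multimax-equi}.
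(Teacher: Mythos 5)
Your overall strategy is the paper's: build a coupling of $\{R^{1},\dots,R^{n}\}$ in which, for each value $i$, the events $\{S^{1}=i\},\dots,\{S^{n}=i\}$ are nested in a single direction, observe that unions of same-direction nested events are again nested, and conclude maximality of each dichotomized pair from Theorem~\ref{thm:multimax-equi}. (The paper does not construct the coupling explicitly; it simply posits one with nested value-events, and it restricts attention to dichotomizations $W\subseteq\{2,\dots,k\}$ not containing the exceptional value, the rest being complements. Your case $1\in A$ is handled most cleanly the same way---by passing to the complement---rather than by ``block geometry.'')

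The place where your proof breaks is the explicit realization. With value $1$ anchored at the left of $[0,1]$ and values $2,\dots,k$ stacked contiguously to its right, the right endpoint of block $i$ under $S^{c}$ is the partial sum $\Pr[R^{c}=1]+\Pr[R^{c}=2]+\cdots+\Pr[R^{c}=i]$, and this is \emph{not} monotone in $c$: the first summand decreases while the others increase. Concretely, for $k=3$ take marginals $(0.5,\,0.1,\,0.4)$ for $R^{c}$ and $(0.2,\,0.15,\,0.65)$ for $R^{c+1}$ (dominance-aligned with exceptional value $1$); then $I_{2}^{c}=[0.5,0.6)$ while $I_{2}^{c+1}=[0.2,0.35)$, which are disjoint rather than nested. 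More structurally, if all blocks are contiguous intervals in a fixed order and block $i$ ``grows only by extending its right boundary,'' then block $i+1$'s left boundary moves right, so block $i+1$ cannot also grow; the construction is internally inconsistent whenever two or more non-exceptional values gain mass (and your sign slip---requiring the left boundary to be $\ge$ rather than $\le$---is a symptom of this). The telescoping check you defer to therefore fails. The repair is easy but requires giving up intervals: pass from the partition for $S^{c}$ to that for $S^{c+1}$ by carving out of the (shrinking, sufficiently large) value-$1$ set disjoint measurable pieces of Lebesgue measure $\delta_{i}=\Pr[R^{c+1}=i]-\Pr[R^{c}=i]\ge0$ and adjoining them to the respective value-$i$ sets, $i\ge2$. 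This yields the nested value-events, after which your maximality argument (which is the paper's) goes through.
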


\begin{proof}
Let us consider the split representation consisting of the splits
\[
R_{W}^{c}:=[R^{c}\in W]
\]
for all nonempty $W\subset\{2,\dots,n\}$. All other splits are complements
of these so this is a complete set of splits. Choose a coupling $S$
of $\mathcal{R}$ such that the events 
\[
S^{1}=i,S^{2}=i,\dots,S^{n}=i\tag{*}
\]
form a nested sequence of sets in the domain space of $S$ for each
$i=2,\dots,k$. It follows that the sequence of events
\[
S^{1}\in W,S^{2}\in W,\dots,S^{n}\in W\tag{**}
\]
forms a nested sequence for each nonempty $W\subset\{2,\dots,k\}$,
since these sequences are termwise unions of the nested sequences
({*}). Then, in the split representation of the coupling $S$, each
column $S_{W}$, $W\subset\text{\{2,\ensuremath{\dots,n\}}}$, is
multimaximal, as the sequence of events
\[
S_{W}^{1}=1,S_{W}^{2}=1,\dots,S_{W}^{n}=1
\]
corresponds to the nested sequence ({*}{*}).
\end{proof}
The well-alignedness condition is far from being necessary, as shown
in the example below.
\begin{example}
The split representation of the single-connection system \[%
\begin{tabular}{|c|c|c|c|c|}
\hline 
 & $a$ & $b$ & $c$ & $d$\tabularnewline
\hline 
\hline 
$R^{1}$ & $.7$ & $.1$ & $.1$ & $.1$\tabularnewline
\hline 
$R^{2}$ & .1 & $.5$ & $.2$ & $.2$\tabularnewline
\hline 
$R^{3}$ & $.2$ & $.2$ & $.3$ & $.3$\tabularnewline
\hline 
\end{tabular}\]is noncontextual as it has the coupling\[%
\begin{tabular}{|c|c|c|c|c|c|c|c|c|c|c|}
\hline 
 & $.1$ & $.1$ & $.1$ & $.1$ & $.1$ & $.1$ & $.1$ & $.1$ & $.1$ & $.1$\tabularnewline
\hline 
\hline 
$S^{1}$ & $a$ & $a$ & $a$ & $a$ & $a$ & $a$ & $a$ & $b$ & $c$ & $d$\tabularnewline
\hline 
$S^{2}$ & $b$ & $b$ & $b$ & $b$ & $c$ & $d$ & $a$ & $b$ & $c$ & $d$\tabularnewline
\hline 
$S^{3}$ & $b$ & $a$ & $c$ & $d$ & $c$ & $d$ & $a$ & $b$ & $c$ & $d$\tabularnewline
\hline 
\end{tabular}\] whose split representation can be verified to be multimaximally
connected (see the theorem below for a general condition for this).
However, the exceptional index is $a$ for $R^{2}\succcurlyeq R^{1}$
(and for $R^{3}\succcurlyeq R^{1})$ and $b$ for $R^{3}\succcurlyeq R^{2}$.
\end{example}

\begin{thm}
Let $S=\{S^{1},\dots,S^{n}\}$ be a coupling of the single connection
$\{R^{1},\dots,R^{n}\}$ with the value set $\{1,\dots,k\}$, $k\ge3$,
and let the index $l$ enumerate all value combinations of $S$. Let
$S^{i}(l)$ denote the value of $S^{i}$ in the combination of values
indexed by $l$. With reference to the matrix\[%
\begin{tabular}{|c|c|c|c|c|c|}
\hline 
$ $ & $\cdots$ & $l$ & $\cdots$ & $l'$ & $\cdots$\tabularnewline
\hline 
\hline 
$\vdots$ &  & $\vdots$ &  & $\vdots$ & \tabularnewline
\hline 
$S^{i}$ & $\cdots$ & $S^{i}\left(l\right)=x$ & $\cdots$ & $S^{i}\left(l'\right)=y$ & $\cdots$\tabularnewline
\hline 
$\vdots$ &  & $\vdots$ &  & $\vdots$ & \tabularnewline
\hline 
$S^{i'}$ & $\cdots$ & $S^{i'}\left(l\right)=z$ & $\cdots$ & $S^{i'}\left(l'\right)=w$ & $\cdots$\tabularnewline
\hline 
$\vdots$ &  & $\vdots$ &  & $\vdots$ & \tabularnewline
\hline 
probability & $\cdots$ & $p_{l}$ & $\cdots$ & $p_{l'}$ & $\cdots$\tabularnewline
\hline 
\end{tabular},\] the split representation of $S$ is multimaximally connected
if and only if there are no indices $i,i',l,l'$ such that $p_{l},p_{l'}>0$
and 
\[
x\ne y\ne w\ne z\ne x
\]
(which does not preclude $x=w$ and $z=y$).
\end{thm}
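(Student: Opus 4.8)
The plan is to peel off the definitions until the statement becomes a purely combinatorial fact about how a dichotomizing set $W$ can ``separate'' two pairs of values, and then verify that fact. First I would reduce the problem to a pairwise condition. By Definition~\ref{def:binary-noncontextuality} and the remarks following it, the split representation of $S$ is multimaximally connected iff, for every dichotomization $W$ of $\{1,\dots,k\}$, the coupling $\{[S^{1}\in W],\dots,[S^{n}\in W]\}$ of the binary connection $\{[R^{1}\in W],\dots,[R^{n}\in W]\}$ is multimaximal; by the equivalence of statements 1 and 4 in Theorem~\ref{thm:multimax-equi}, this is the same as requiring that for every $W$ and every pair $i,i'$ the binary pair $\{[S^{i}\in W],[S^{i'}\in W]\}$ be a maximal coupling.

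Next I would record the elementary reformulation that a coupling $\{Y,Y'\}$ of two $0/1$ variables is maximal iff $\min(\Pr[Y=0,Y'=1],\,\Pr[Y=1,Y'=0])=0$, that is, iff it is ``one-sided.'' One direction is statement~\ref{enu:multimax-equi-10} of Theorem~\ref{thm:multimax-equi} applied in whichever order makes the marginal $\Pr[\cdot=1]$ nonincreasing; for the converse, if $\Pr[Y=0,Y'=1]=0$ then $\Pr[Y'=1]=\Pr[Y=Y'=1]\le\Pr[Y=1]$, so this is already statement~\ref{enu:multimax-equi-10} in the correct order and the pair is maximal (symmetrically if $\Pr[Y=1,Y'=0]=0$).

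Combining these, the split representation of $S$ \emph{fails} to be multimaximally connected iff there are indices $i,i'$ and a dichotomization $W$ with $\Pr[S^{i}\notin W,\ S^{i'}\in W]>0$ and $\Pr[S^{i}\in W,\ S^{i'}\notin W]>0$. Writing these probabilities as sums over value combinations, the first means there is a combination $l$ with $p_{l}>0$, $x:=S^{i}(l)\notin W$, $z:=S^{i'}(l)\in W$; the second means there is a combination $l'$ with $p_{l'}>0$, $y:=S^{i}(l')\in W$, $w:=S^{i'}(l')\notin W$. For fixed $i,i',l,l'$ with $p_{l},p_{l'}>0$, such a $W$ exists iff some $W\subseteq\{1,\dots,k\}$ satisfies $\{z,y\}\subseteq W$ and $\{x,w\}\cap W=\emptyset$; taking $W=\{z,y\}$ shows this is possible iff $\{z,y\}\cap\{x,w\}=\emptyset$, i.e. iff $z\ne x$, $z\ne w$, $y\ne x$, and $y\ne w$ — and such a $W$ is automatically a proper nonempty subset of $\{1,\dots,k\}$, hence a legitimate dichotomization. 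These four inequalities are precisely the cyclic chain $x\ne y\ne w\ne z\ne x$ (which does not preclude $x=w$ or $z=y$), and the condition is visibly invariant under swapping $i\leftrightarrow i'$ or $l\leftrightarrow l'$, so nothing is lost; this is exactly the negation of the asserted criterion.

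The genuine content is the elimination of the quantifier over $W$: recognizing that ranging over all dichotomizations collapses to the disjointness of the two (at most two-element) sets $\{z,y\}$ and $\{x,w\}$, and that this disjointness unwinds verbatim into the stated cyclic inequality. The step that merely needs care is keeping the dominance direction of the two binary marginals straight when invoking statement~\ref{enu:multimax-equi-10}; everything else is bookkeeping with the definitions of split representation and multimaximal connectedness.
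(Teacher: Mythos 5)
Your proposal is correct and follows essentially the same route as the paper's proof: reduce multimaximal connectedness to pairwise maximality via Theorem~\ref{thm:multimax-equi} (statement \ref{enu:multimax-equi-10}), unwind a failure of maximality into two positive-probability value combinations, and observe that a separating dichotomization exists iff $\{x,w\}$ and $\{y,z\}$ are disjoint, which is the stated cyclic inequality. The only differences are cosmetic (you take the witness $W=\{z,y\}$ where the paper takes $\{x,w\}$, and you spell out the marginal-ordering bookkeeping in statement \ref{enu:multimax-equi-10} that the paper leaves implicit).
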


\begin{proof}
If such $i,i',l,l'$ exist, then $\Pr[S_{\{x,w\}}^{i}=1,S_{\{x,w\}}^{i'}=0]\ge p_{l}>0$
and $\Pr[S_{\{x,w\}}^{i}=0,S_{\{x,w\}}^{i'}=1]\ge p_{l'}>0$, which
implies by Theorem~\ref{thm:multimax-equi} (statement \ref{enu:multimax-equi-10})
that $(S_{\{x,w\}}^{1},\dots,S_{\{x,w\}}^{n})$ is not a multimaximal
coupling. Conversely, assume that for some $W\subset\left\{ 1,\ldots,k\right\} $
the coupling $(S_{W}^{1},\dot{\dots,}S_{W}^{n})$ is not multimaximal.
This means, by Theorem~\ref{thm:multimax-equi} (statement \ref{enu:multimax-equi-10}),
that for some $i,i'$ we have $\Pr[S_{W}^{i}=1,S_{W}^{i'}=0]>0$ and
$\Pr[S_{W}^{i}=0,S_{W}^{i'}=1]>0$, which further implies that there
exist indices $l,l'$ with $p_{l},p_{l'}>0$ such that 
\[
\underbrace{S^{i}(l)}_{=:x},\underbrace{S^{i'}(l')}_{=:w}\in W\not\ni\underbrace{S^{i'}(l)}_{=:z},\underbrace{S^{i}(l')}_{=:y}.
\]
This implies $x\ne y\ne w\ne z\ne x$.
\end{proof}
Note that in this proof, if the indices $i,i',l,l'$ satisfying the
stipulated conditions exist, we always can choose $W=\left\{ x,w\right\} $
such that $\left(S_{W}^{i},S_{W}^{i'}\right)$ is not maximal. This
$W$ is a two-element or one-element set. In \cite{DzhCerKuj2017},
a subsystem of a split representation of system consisting of the
one-element or two-element dichotomizations is called a \emph{1-2
subsystem}. We have therefore the following consequence of the theorem.
\begin{cor}
The split representation of a single connection based on all possible
dichotomizations is noncontextual if and only if its 1-2-subsystem
is noncontextual.
\end{cor}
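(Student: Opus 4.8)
The plan is to derive the corollary directly from the immediately preceding theorem. First I would recall what the 1-2 subsystem is: it is the subsystem of the full split representation that retains only those dichotomizations $[R^c\in W]$ where $W$ is a one- or two-element subset of $\{1,\dots,k\}$. Noncontextuality of the full split representation clearly implies noncontextuality of any subsystem, by Noncontextual Nestedness (or just because a multimaximally connected coupling of the full split representation restricts to one of the subsystem); so the ``only if'' direction is immediate, and the content is the ``if'' direction.

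For the ``if'' direction, suppose the 1-2 subsystem is noncontextual, so it has a multimaximally connected coupling $S'$. Since $S'$ is a coupling of a split representation of the single connection $\{R^1,\dots,R^n\}$, and the values of the one-element dichotomizations $[R^c\in\{x\}]$ for all $x$ already determine $R^c$, the coupling $S'$ canonically induces a coupling $S=\{S^1,\dots,S^n\}$ of the original connection (using the lemma that a coupling of $\mathcal R$ and the split representation of that coupling determine each other). Now form the \emph{full} split representation of this $S$. I claim it is multimaximally connected. By the preceding theorem, failure of multimaximality of the full split representation of $S$ would mean there exist indices $i,i',l,l'$ with $p_l,p_{l'}>0$ and $x\ne y\ne w\ne z\ne x$, where $x=S^i(l)$, $z=S^{i'}(l)$, $y=S^i(l')$, $w=S^{i'}(l')$. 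But then, taking $W=\{x,w\}$, which is a set of size at most two, the remark following the theorem shows $(S_W^i,S_W^{i'})$ is not maximal — and $S_W^i,S_W^{i'}$ are exactly the coordinates of the 1-2 subsystem coupling $S'$. This contradicts the assumed multimaximal connectedness of $S'$. Hence no such $i,i',l,l'$ exist, and by the theorem the full split representation of $S$ is multimaximally connected, so the connection is noncontextual.

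The one point that needs a little care — and the main (mild) obstacle — is making precise that the 1-2 subsystem coupling $S'$ and the full-split-representation coupling built from $S$ agree on the shared 1-2 columns. This is where I would invoke that the coupling $S$ of the original connection is determined by (and determines) its split representation: the columns $S_W$ of that split representation, for $|W|\le 2$, are by construction $[S^c\in W]$, and these are precisely the random variables of $S'$ because $S$ was constructed from $S'$. Once that identification is in place, the theorem plus its following remark do all the work, and the corollary follows with no further computation.
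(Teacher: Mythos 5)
Your proof is correct and follows essentially the same route as the paper: the corollary is derived from the preceding theorem via the observation that any witness of non-multimaximality can be taken to be a column $S_W$ with $\lvert W\rvert\le 2$. Your additional care in reconstructing the coupling $S$ of the original connection from the 1-2 subsystem coupling $S'$ (via the one-element dichotomizations and the lemma on mutual determination) is a detail the paper leaves implicit, but it is the same argument.
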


This generalizes the analogous result obtained in \cite{DzhCerKuj2017}
for connections consisting of two random variables.

\section{Conclusion}

To summarize, the CbD notion of (non)contextuality, based on multimaximality
and dichotomizations, satisfies the principles of Analyticity, Noncontextual
Nestedness, Deterministic Redundancy, and Coarse-graining. It properly
specializes to the traditional notion when applied to consistently
connected systems, and contains as a special case the well-developed
theory of cyclic systems of binary random variables. We have formulated
a general principle by which we choose allowable dichotomizations.
It requires that both parts of a dichotomization of the set of possible
values $E_{q}$ be linked subsets of $E_{q}$. For the broad class
of random variables we called ordinary, this uniquely determines the
set $\lambda_{q}$ of all allowable dichotomizations of $E_{q}$,
and we have presented applications of this principle to real-valued
and categorical random variables.

Clearly, we have not provided an exhaustive list of principles or
desiderata for a well-constructed theory of contextuality. We may
need to explicate additional principles to be able to deduce the CbD
theory (or perhaps a generalization thereof) axiomatically, as the
only solution.

\paragraph{Conflict of Interests}

Author declares no conflicts of interests

\paragraph{Data Availability}

Data sharing not applicable to this article as no datasets were generated
or analyzed during the current study.

\end{document}